\newtheorem{thm}{Theorem}
\newtheorem{lem}[thm]{Lemma}
\newtheorem{pro}[thm]{Proposition}
\theoremstyle{remark}
\let\orgdescriptionlabel\descriptionlabel
\renewcommand*{\descriptionlabel}[1]{%
  \let\orglabel\label
  \let\label\@gobble
  \phantomsection
  \edef\@currentlabel{#1}%
  \let\label\orglabel
  \orgdescriptionlabel{#1}%
}
\newcommand{\R}{\ensuremath{\mathbb{R}}}
\newcommand{\Fc}{\ensuremath{\mathcal{F}}}
\newcommand{\change}[1]{{#1}}  
\newcommand{\rnote}[1]{{#1}} 
\newcommand{\pnote}[1]{{#1}} 
\newcommand{\nnote}[1]{{#1}} 
\newcommand{\mchange}[1]{{#1}} 
\newcommand{\Bchange}[1]{{#1}} 
\newcommand{\Achange}[1]{{#1}} 
\newcommand{\bnote}[1]{{#1}} 
\newcommand{\rpnote}[1]{{#1}} 
\newcommand{\rrnote}[1]{{#1}} 
\newcommand{\rfnote}[1]{{#1}} 
\newcommand{\wsnote}[1]{{#1}} 
\newcommand{\longVerNote}[1]{{#1}} 
\newcommand{\lAnote}[1]{{#1}} 
\newcommand{\lFBnote}[1]{{#1}} 
\newcommand{\lFnote}[1]{{#1}} 
\newcommand{\remove}[1]{{}}
\newcommand{\MoveToAppendix}[1]{{}}
\begin{document}
\mainmatter

\title{Morphing Schnyder drawings of planar triangulations}
\titlerunning{Morphing Schnyder drawings}  

\author{Fidel Barrera-Cruz \and Penny Haxell \and Anna
  Lubiw} \authorrunning{Fidel Barrera-Cruz et al.}

\tocauthor{Fidel Barrera-Cruz, Penny Haxell, and Anna Lubiw}
\institute{University of Waterloo, Waterloo, Canada\\
{\tt \{fbarrera, pehaxell, alubiw\}@uwaterloo.ca}
}

\maketitle              
 

\begin{abstract}
  We consider the problem of morphing between two planar drawings of
  the same triangulated graph, maintaining straight-line planarity.  A
  paper in SODA 2013 gave a morph that consists of $O(n^2)$ steps
  where each step is a linear morph that moves each of the $n$
  vertices in a straight line at uniform
  speed~\Bchange{\cite{Alamdari13}}.  However, their method imitates
  edge contractions so the grid size of the intermediate drawings is
  not bounded and the morphs are not good for visualization purposes.
  Using Schnyder embeddings, we are able to morph in $O(n^2)$ linear
  morphing steps and improve the grid size to $O(n)\times O(n)$ for a
  \Achange{significant}
  class of drawings of triangulations, namely the class of weighted
  Schnyder drawings. The morphs are visually attractive.  Our method
  involves implementing the basic ``flip'' operations of Schnyder
  woods as linear morphs.  \keywords{algorithms, computational
    geometry, graph theory}
\end{abstract}


\section{Introduction}\label{sec:intro}

Given a triangulation on $n$ vertices and two straight-line planar
drawings of it, $\Gamma$ and $\Gamma'$, that have the same unbounded
face, it is possible to morph from $\Gamma$ to $\Gamma'$ while
preserving straight-line planarity. This was proved by Cairns in
1944~\cite{Cairns44}.  
Cairns's proof is algorithmic but requires 
exponentially many steps, where
each step is a \emph{linear morph} that moves every vertex in a
straight line at uniform speed.  Floater and
Gotsman~\cite{Floater1999} gave a polynomial time algorithm
using Tutte's graph drawing algorithm~\cite{Tutte1963}, but their
morph is not composed of linear morphs so the trajectories of the
vertices are more complicated, and there are no guarantees on how
close vertices and edges may become.  Recently, Alamdari et
al.~\cite{Alamdari13} gave a polynomial time algorithm based on
Cairns's approach that uses $ O(n^{2})$ linear morphs, \Bchange{and this has now
been improved to $O(n)$ 
by  Angelini et al.~\cite{Angelini14}.}
 The main idea is to contract (or almost contract) edges.  With this
approach, perturbing vertices to prevent coincidence is already
challenging, and perturbing to keep them on a nice grid seems
impossible.

In this paper we propose a new approach to morphing based on Schnyder
drawings.  We give a planarity-preserving morph that is composed of
$O(n^2)$ linear morphs and for which the vertices of each of the
$O(n^2)$ intermediate drawings are on a $6n \times 6n$ grid.
Our algorithm works for \emph{weighted Schnyder drawings} which are
obtained from a Schnyder wood together with an assignment of positive
weights to the interior faces.
A Schnyder wood \lFnote{(see Section~\ref{sec:Schnyder})} is a special type of partition \nnote{(colouring)} and
orientation
of the edges of a planar triangulation into three rooted directed
trees.
Schnyder~\cite{Schnyder89,Schnyder90}
  used them to obtain straight-line planar drawings of
 triangulations
  in an $ O(n)\times O(n)$ grid.
  To do this he defined barycentric coordinates for each vertex in
  terms of the number of faces in certain regions of the Schnyder
  wood.
Dhandapani~\cite{Dhandapani10} noted that assigning \Achange{any} positive weights
to the faces still gives straight-line planar drawings.  We call these
\emph{weighted Schnyder drawings}---they are the drawings on
  which our morphing algorithm works.
  
Two weighted Schnyder drawings may differ in weights and in the
Schnyder wood.  We address these separately: we show that changing
weights corresponds to a single \pnote{planar} linear morph; altering
the Schnyder wood is more significant.

The set of Schnyder woods of a given planar triangulation forms a
distributive lattice~\cite{Brehm00,Felsner04,OssonaDeMendez94}
possibly of exponential size~\cite{Felsner07}.  The basic operation
for traversing this lattice is a ``flip'' that reverses a cyclically
oriented triangle and changes colours appropriately.  It is known that
the flip distance between two Schnyder woods in the lattice is
$ O(n^{2})$ (see Section~\ref{sec:Schnyder}).  Therefore, to morph
between two 
Schnyder drawings in $O(n^2)$ steps, it suffices
to show how a flip can be realized via a constant number of
planar linear morphs.  We show that flipping a facial triangle
corresponds to a single planar linear morph, 
\Achange{and that a flip of a separating triangle can be realized by three planar linear morphs.
}

\begin{figure}[ht]
  \centering
     \scalebox{.82}{\input{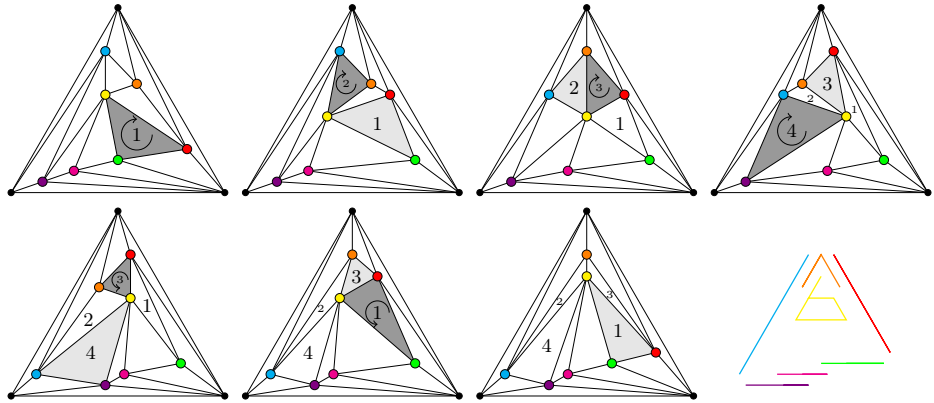} }    
  \caption{\wsnote{A sequence of triangle flips, counterclockwise along the top row and clockwise along the 
  bottom row.  In each drawing the triangle to be flipped is darkly shaded, and the one most recently flipped is lightly shaded.   
  The linear morph from each   drawing to the next one is planar. 
  Vertex trajectories are shown bottom right.
  }}
  \label{fig:anim_8_steps}
\end{figure}

\Achange{ There is hope that our method will give good visualizations
  for morphing.  See Figure~\ref{fig:anim_8_steps} \lAnote{and the larger example in 
Figure~\ref{fig:anim_16_steps}.  Further examples can be found in~\cite{BarreraCruz-web}.
\lFBnote{Note that the trajectories followed by vertices in a facial flip are
  always parallel to one of the three exterior edges.}
Our intermediate drawings
  lie on a $6n \times 6n$ grid where vertices are at least distance 1
  apart and face areas are at least $\frac{1}{2}$.
Comparing to visualization properties of previous methods,}    
the edge-contraction method of Alamdari et al.~\cite{Alamdari13} is
  \Achange{not good} 
  for visualization purposes---at the end of the recursion, the whole
  graph has contracted to a triangle. The method of Floater and
  Gotsman~\cite{Floater1999} gives good visualizations, based on
  experiments and heuristic improvements developed by Shurazhsky and
  Gotsman~\cite{Surazhsky-Gotsman}.  However, their method suffers the
  same drawbacks as Tutte's graph drawing method, namely that vertices
  and edges may come very close together.  }

\Achange{Not all straight-line planar triangulations are weighted
  Schnyder drawings, but we can recognize those that are in polynomial
  time~\lFnote{(see Section~\ref{sec:conclusions})}.  The problem of extending our result to all
  straight-line planar triangulations remains open.}
\Achange{There is partial progress in the first author's
  thesis~\cite{BarreraCruz14}.}
%

This paper is structured as follows.  Section~\ref{sec:Schnyder}
contains the relevant background on Schnyder woods.
Section~\ref{sec:main} contains the precise statement of our main
result, and the general outline of the proof.  In
Section~\ref{sec:morph_weight_distribution} we show that changing
\pnote{face} weights corresponds to a linear morph.  Flips of facial
triangles are handled in Section~\ref{sec:face_morph} and flips of
separating triangles are handled in
Section~\ref{sec:separating_triangle_morph}. In
Section~\ref{sec:finding-weights} we explore which drawings are
weighted Schnyder drawings.

\begin{figure}[htbp]
  \centering   
\input{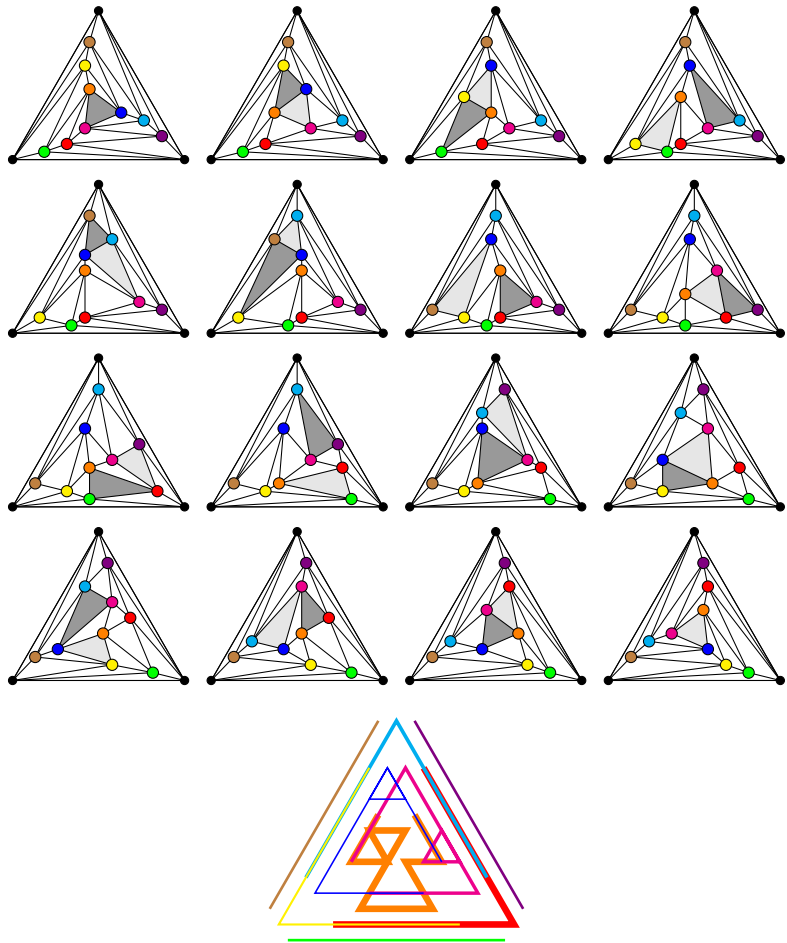}
  \caption{\wsnote{A sequence of triangle flips.  In each drawing the
      triangle to be flipped is darkly shaded, and the one most
      recently flipped is lightly shaded.
      The linear morph from each 
      drawing to the next one is planar. At the bottom we illustrate
      the piecewise linear trajectories followed by vertices during
      the sequence of flips.}}
  \label{fig:anim_16_steps}
\end{figure}

\subsection{\pnote{Definitions and notation}}
\vspace{-.05in}
Consider two drawings $\Gamma$ and $\Gamma'$ of a planar triangulation
$T$. A \emph{morph} between $\Gamma$ and $\Gamma'$ is a continuous
family of drawings of $T$, $\{\Gamma^{t}\}_{t\in[0,1]}$, such that
$\Gamma^{0}=\Gamma$ and $\Gamma^{1}=\Gamma'$. We say a face $xyz$
\emph{collapses} during the morph $\{\Gamma^{t}\}_{t\in[0,1]}$ if
there is $t\in (0,1)$ such \Bchange{that} $x,y$ and $z$ are collinear in
$\Gamma^{t}$. We call a morph between $\Gamma$ and $\Gamma'$
\emph{planar} if $\Gamma^{t}$ is a planar drawing of $T$ for all
$t\in[0,1]$.  Note that a morph is planar
if and only if no face collapses during the morph. We call a morph
\emph{linear} if each vertex moves from its position in $\Gamma^{0}$
to its position in $\Gamma^{1}$ along a line segment and at constant
speed.  
\Achange{Note that each vertex may have a different speed.}
We denote such a linear morph by $\langle\Gamma^{0},\Gamma^{1}\rangle$.


Throughout the paper we deal with a planar triangulation $T$
  with a distinguished exterior face 
with vertices 
  $a_1, a_2,
  a_3$ in clockwise order.  
  The set of interior faces is denoted $\Fc(T)$.
A 3-cycle $C$ whose removal disconnects $T$ is called a
  \emph{separating triangle}, and in this case we define $T|_{C}$ to
  be the triangulation formed by vertices inside $C$ together with $C$
  as the exterior face, and we define $T\setminus C$ to be the
  triangulation obtained from $T$ by deleting the vertices inside $C$.

\section{Schnyder woods and their properties}
\label{sec:Schnyder}

%
%
%
A \emph{Schnyder wood}
of a planar triangulation $T$ 
\rfnote{with exterior vertices $a_1, a_2, a_3$} is an
assignment of directions and colours 
\rfnote{1, 2, and 3} to the
interior edges of $T$ such that the following two conditions hold
\longVerNote{\Bchange{(see Figure~\ref{fig:sch_wood_def})}.}
\vspace{ -.05in}
\begin{description}
\item[\rfnote{(D1)}\label{enum:int_vx_prop}] \rfnote{Each interior
    vertex has three outgoing edges and they have colours 1, 2, 3 in
    clockwise order.  All incoming edges in colour $i$ appear between
    the two outgoing edges of colours $i-1$ and $i+1$ (index
    arithmetic modulo 3).
}
\item[\rfnote{(D2)}\label{enum:ext_vx_prop}] At the exterior vertex
  $a_{i}$, all the interior edges are incoming and of colour $i$.
\end{description}
\begin{figure}[ht]
  \centering
  \includegraphics[width=4in]{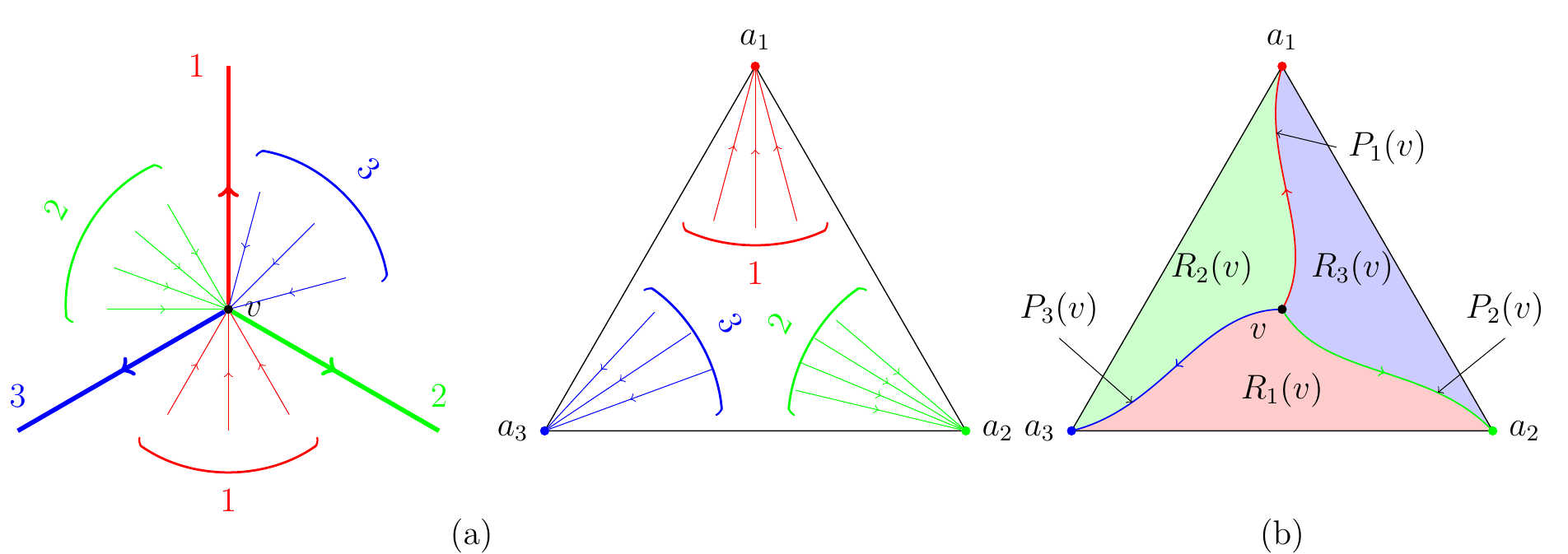}
  \caption{ (a) Conditions (D1) and
    (D2) for a Schnyder wood.
  (b) The paths and regions for vertex
      $v$. Throughout, we use red, green, blue for
      colours 1, 2, 3, respectively.}
  \label{fig:sch_wood_def} 
\end{figure}
   
\rfnote{The following basic concepts and properties are due to
Schnyder~\cite{Schnyder90}.
For any Schnyder wood the edges of colour $i$ form a tree $T_i$ rooted
at $a_i$.  The path from internal vertex $v$ to $a_i$ in $T_i$ is
denoted $P_i(v)$.  } \remove{ A Schnyder wood defines a partition of
the interior edges of a planar triangulation into three edge disjoint
trees, one of each colour. Each of these trees can be thought of as
being rooted at an exterior vertex, that is, the $i$-tree is rooted at
$a_{i}$. Given a Schnyder wood $S$ of a planar triangulation we define
$P_{i}(v)$ to be the outgoing path from $v$ to $a_{i}$ in the $i$-tree
in $S$. There are two properties of the trees that we will find
useful.}
\begin{description}
\item [(P)\label{item:path_no_common_vertex}] If $T_{i}^{-}$ denotes
  the tree in colour $i$ with all arcs reversed, then $T_{i-1}^{-}\cup
  T_{i}\cup T_{i+1}^{-}$ contains no directed cycle. In particular,
  any two outgoing paths from a vertex $v$ have no vertex in common,
  except for $v$, i.e., $P_{i}(v)\cap P_{j}(v)=\{v\}$ for $i\neq
  j$.
\end{description}
\rfnote{The \emph{descendants} of vertex $v$ in $T_i$, denoted
  $D_i(v)$, are the vertices that have paths to $v$ in $T_i$.  }
\remove{We say vertex $v$ is a \emph{descendant} of vertex $u$ in the
  $i$-tree if $u$ is a vertex of $P_{i}(v)$. For a vertex $u$, we
  define the set of descendants of $u$ in the $i$-tree
  as \[D_{i}(u)\coloneqq \{v\in V(T): v\text{ is a descendant of
  }u\text{ in the }i\text{-tree}\}.\] } \rfnote{For any interior
  vertex $v$ the three paths $P_i(v), i=1,2,3$ partition the
  triangulation into three regions $R_i(v), i=1,2,3$, where $R_i(v)$
  is bounded by $P_{i+1}(v),P_{i-1}(v)$ and $a_{i+1}a_{i-1}$.}%
%
\remove{
\rpnote{The paths outgoing from an interior vertex $v$
partition of the set of interior faces of the planar triangulation
into three sets.} The $i$-th region, $R_{i}(v)$, is defined to be the
region bounded by $P_{i+1}(v),P_{i-1}(v)$ and $a_{i+1}a_{i-1}$. 
\rpnote{A \emph{barycentric embedding} of a graph $G$, a concept
  introduced in~\cite{Schnyder89,Schnyder90}, is an injective
  function} $f:V(G)\rightarrow \R^{3}$, $v\mapsto
(v_{1},v_{2},v_{3})$, so that the following two conditions hold.
\begin{itemize}
\item For each $v\in V(G)$, $v_{1}+v_{2}+v_{3}=2n-5$, and
\item for every $uv\in E(G)$ and $w\in V(G)\setminus\{u,v\}$ there is
  $k\in\{1,2,3\}$ so that $u_{k},v_{k}<w_{k}$.
\end{itemize}
Schnyder proved that if a graph admits a barycentric embedding then it
is a planar drawing.
In fact, barycentric embeddings for planar triangulations can be obtained from
Schnyder woods. This is stated in the following result.
}
\rfnote{ Schnyder proved that every triangulation $T$ has a Schnyder
  wood and that a planar drawing of $T$ can be obtained from
  coordinates that count faces inside regions: }

\begin{thm}[Schnyder~\cite{Schnyder89,Schnyder90}]\label{thm:schnyder_embedding}
  Let $T$ be a planar triangulation on $n$ vertices equipped with a
  Schnyder wood $S$. 
  Consider the map $f:V(T)\rightarrow\R^{3}$, where
  $f(a_{i})=(2n-5)e_{i}$, where $e_{i}$ denotes the $i$-th standard
  basis vector in $\R^{3}$, and for each interior vertex $v$,
  $f(v)=(v_{1},v_{2},v_{3})$, where $v_{i}$ denotes the number of
  faces contained inside region $R_{i}(v)$. Then $f$ \rfnote{defines a
     straight-line planar drawing.}
\end{thm}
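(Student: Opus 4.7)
The plan is to verify two standard properties of the map $f$ that together force it to be a straight-line planar drawing: (i) the image of $f$ lies in the affine plane $\{x_{1}+x_{2}+x_{3}=2n-5\}$, so the drawing is effectively two-dimensional with the three exterior vertices $a_{1},a_{2},a_{3}$ forming a bounding triangle; and (ii) the \emph{dominance} property, namely that for every edge $uv\in E(T)$ and every $w\in V(T)\setminus\{u,v\}$ there is an index $k\in\{1,2,3\}$ with $u_{k}<w_{k}$ and $v_{k}<w_{k}$. Dominance forces any third vertex $w$ to lie strictly on one side of the line through $u$ and $v$ in each of the three Schnyder coordinate directions, which rules out edge crossings and ensures the combinatorial faces of $T$ are drawn with the correct cyclic order.

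For (i): Euler's formula gives that $T$ has exactly $2n-5$ bounded triangular faces. For an exterior vertex $a_{i}$ the identity is immediate from $f(a_{i})=(2n-5)e_{i}$. For an interior vertex $v$, property (P) applied to $v$ ensures that the three outgoing paths $P_{1}(v),P_{2}(v),P_{3}(v)$ meet only at $v$, so together with the exterior triangle they partition the interior of $T$ into the three regions $R_{1}(v),R_{2}(v),R_{3}(v)$; hence $v_{1}+v_{2}+v_{3}$ counts each interior face exactly once and equals $2n-5$.

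The main step is (ii). I would reduce it to the following \emph{region-monotonicity} lemma: if $u,w$ are distinct vertices of $T$ and $w$ lies in the closed region $\overline{R_{k}(u)}$, then $R_{k}(w)\subsetneq R_{k}(u)$, and therefore $w_{k}<u_{k}$. The proof uses property (P) in an essential way: the two paths $P_{k-1}(w)$ and $P_{k+1}(w)$ that bound $R_{k}(w)$ must eventually reach $a_{k-1}$ and $a_{k+1}$, and the acyclicity of $T_{k-1}^{-}\cup T_{k}\cup T_{k+1}^{-}$ prevents them from re-entering $\overline{R_{k}(u)}$ after leaving it, forcing the region they enclose together with $a_{k+1}a_{k-1}$ to be strictly inside $R_{k}(u)$.

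To deduce (ii) from the lemma, fix an edge $uv$ and a third vertex $w$. Because $uv$ is an edge of the plane graph $T$, its two endpoints cannot lie in the strict interiors of two different regions $R_{i}(w)$: this would force $uv$ to cross one of the boundary paths $P_{j}(w)$, which is impossible in a planar triangulation. Hence there is an index $k$ with $u,v\in\overline{R_{k}(w)}\setminus\{w\}$, and the monotonicity lemma applied to $(u,w)$ and $(v,w)$ with this $k$ yields $u_{k}<w_{k}$ and $v_{k}<w_{k}$. I expect the real technical work to lie in the region-monotonicity lemma, particularly in the boundary case where $w$ sits on one of the bounding paths of $R_{k}(u)$ rather than in its interior; here one has to track carefully, using property (P), how the paths $P_{k\pm 1}(w)$ relate to $P_{k\pm 1}(u)$ to rule out any re-entry and secure strict inclusion.
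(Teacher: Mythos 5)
The paper does not prove this statement at all: it is imported verbatim as Schnyder's theorem (cited to \cite{Schnyder89,Schnyder90}) and used as a black box, so there is no in-paper proof to compare against. What you have written is, in outline, a faithful reconstruction of Schnyder's own argument: (i) the coordinates sum to $2n-5$ because the three regions of an interior vertex partition the $2n-5$ interior faces (using property (P) to see the three outgoing paths meet only at $v$), and (ii) the drawing is a ``weak barycentric representation,'' i.e.\ satisfies the dominance property, which is exactly the paper's property (R1); the reduction of (R1) to a region-monotonicity lemma ($w\in\overline{R_k(u)}$, $w\neq u$, implies $R_k(w)\subsetneq R_k(u)$) is also Schnyder's route. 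Two places are glossed over more than I would like. First, the step ``dominance rules out edge crossings'' is not justified by saying $w$ lies on one side of the line through $uv$: dominance only places $w$ outside the set $\{p: p_k\le\max(u_k,v_k)\}$ for one particular $k$, which shows no vertex lies on an edge; excluding a crossing of two independent edges $uv$ and $u'v'$ requires applying the condition to all four endpoint/edge pairs and deriving a contradiction from the resulting coordinate inequalities at the crossing point (and showing the non-crossing drawing of a graph with $3n-6$ edges is actually a planar embedding of $T$ takes one further counting remark). Second, the strictness of the inclusion $R_k(w)\subsetneq R_k(u)$ in the boundary case, and the case where $w$ is an exterior vertex (for which $R_k(w)$ is not defined and the inequality must be checked directly), need to be handled explicitly. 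None of these is a wrong turn --- they are the standard details of Schnyder's proof --- but as written they are asserted rather than proved.
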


\rfnote{ Dhandapani~\cite{Dhandapani10} noted that the above result
  generalizes to weighted faces.  A \emph{weight distribution}
  $\mathbf{w}$ is a function that assigns a positive weight to each
  internal face such that the weights sum to $2n-5$.  For any weight
  distribution \lFnote{the \emph{$i$-th
    coordinate} $v_i$ of vertex $v$} is defined as:
\begin{equation}
v_{i}=\sum \{ \mathbf{w}(f) :  {f\in R_{i}(v)} \}.
\label{eq:weighted_drawing}
\end{equation}
\lFnote{Theorem~\ref{thm:schnyder_embedding} still holds if we use coordinates
as defined~\eqref{eq:weighted_drawing}.}
We call the resulting straight-line planar drawing the
\emph{weighted Schnyder
  drawing} obtained from $\mathbf{w}$ and $S$.   
}
\remove{
As noted by Dhandapani in~\cite{Dhandapani10}, the previous result
holds more generally. 
We may consider a weight function
$\mathbf{w}:\Fc(T)\rightarrow (0,2n-5)$ that assigns to each face in
$\Fc(T)$ a positive weight such that
$\sum_{f\in\Fc(T)}\mathbf{w}(f)=2n-5$, we call such a function a
\emph{weight distribution}. 
Given a weight distribution we let
\begin{equation}
v_{i}=\sum_{f\in R_{i}(v)}\mathbf{w}(f).\label{eq:weighted_drawing}
\end{equation}

\rpnote{Dhandapani observed that mapping the exterior vertex $a_{i}$
  to $(2n-5)e_{i}$ and each interior vertex $v$ to
  $(v_{1},v_{2},v_{3})$ also defines a barycentric embedding.
We call such drawing $\Gamma$ of $T$ the \emph{weighted Schnyder
  drawing} obtained from $\mathbf{w}$ and $S$.  When the weight
distribution $\mathbf{w}$ is the constant $1$ function, the
corresponding weighted Schnyder drawing
coincides with
that from Theorem~\ref{thm:schnyder_embedding}.}}

\longVerNote{ 
  \lFnote{Some useful properties} of weighted Schnyder drawings are
  stated next. 
\begin{description}
\item[(R1)\label{item:region_property}] Let $T$ be a planar
  triangulation and let $S$ be a Schnyder wood of $T$. Then, for any
  edge $uv\in E(T)$ and any $w\in V\setminus\{u,v\}$, there is
  $k\in\{1,2,3\}$ so that $u,v\in R_{k}(w).$ Consequently, in the
  corresponding weighted Schnyder drawing we have $u_{k},v_{k}<w_{k}$.
  \lFnote{\item[(R2)\label{item:region_property2}] The interiors of $R_{1}(x)$, $R_{2}(z)$ and $R_{3}(y)$ are
    pairwise disjoint.}

  \lFnote{\item[(R3)\label{item:region_property3}] 
  $D_{1}(x)\setminus\{x\}$ is contained in the
    interior of $R_{1}(x)$ and similarly for $y$ and $z$.
Consequently $D_{1}(x)$, $D_{2}(z)$ and
    $D_{3}(y)$ are pairwise disjoint.}

\end{description}
}

\lAnote{
\subsection{Triangle flips}
\label{subsec:flip_triangle}

In this section we}
describe results of Brehm~\cite{Brehm00},
  Ossona De Mendez~\cite{OssonaDeMendez94}, and
  Felsner~\cite{Felsner04} on the flip operation that can be used to
  convert any Schnyder wood to any other.
  Let $S$ be a Schnyder wood of planar triangulation $T$. A flip
  operates on a cyclically oriented triangle $C$ of $T$. 
\lAnote{Lemma~\ref{lem:cyclic_triangle} below gives a property of such cyclically oriented triangles.  Before that we need the following lemma about separating triangles.} 
\longVerNote{
 \begin{lem}\label{coro:mini_schnyder_wood}
   If $C$ is a separating triangle, then the restriction of $S$ to the
   interior edges of $T|_{C}$ is a Schnyder wood of $T|_{C}$.
 \end{lem}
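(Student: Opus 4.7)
The plan is to verify that the restriction $S'$ satisfies the Schnyder wood axioms (D1) and (D2) on $T|_C$, with the three vertices $c_1,c_2,c_3$ of $C$ playing the role of the exterior vertices after a suitable relabeling of colors.

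I would handle (D1) at interior vertices of $T|_C$ first. Since $C$ is a separating $3$-cycle in $T$, any vertex $v$ strictly inside $C$ is an interior vertex of $T$, and every edge of $T$ incident to $v$ lies inside $C$. Hence the cyclic order of the neighbors of $v$, together with the colors and directions of the incident edges, is identical in $T$ and in $T|_C$, so (D1) at $v$ in $S'$ is inherited directly from (D1) at $v$ in $S$.

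Next I would attack (D2) at a vertex $c_i$ of $C$. By planarity, the edges at $c_i$ leading strictly into the interior of $C$ form one contiguous arc $A_i$ in the cyclic order around $c_i$. From (D1) applied to $c_i$ in $S$, the three outgoing edges of $c_i$ split its cyclic neighborhood into three sectors, each of which contains only incoming edges of a single color. So it would suffice to show that $A_i$ contains no outgoing edge of $c_i$; then $A_i$ consists of incoming edges of a single color $\lambda(i)$, and I would declare $c_i$ to be the exterior vertex $a_{\lambda(i)}$ of $T|_C$. To see that $\lambda$ is a permutation of $\{1,2,3\}$, I would pick any interior vertex $v$ of $T|_C$: by property (P) the paths $P_1(v),P_2(v),P_3(v)$ are internally vertex-disjoint, and each must leave $C$ at some $c_j$; the color of the edge of $P_\ell(v)$ incoming at its exit vertex then forces $\lambda$ to assume three distinct values, so $\lambda$ is a bijection.

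The hard part will be ruling out that $c_i$ has an outgoing edge into the interior of $C$. My plan is a contradiction argument using (P): if the outgoing color-$k$ edge of $c_i$ entered $C$, then $P_k(c_i)$ would start inside $C$ and, since $a_k$ lies outside $C$, would have to exit $C$ later at some $c_j\ne c_i$. Letting $m$ denote the color of the edge $c_ic_j$ of $T$ and examining its orientation, I expect to produce two outgoing color-paths from either $c_i$ or $c_j$ that share a vertex beyond their common start, contradicting the internal disjointness guaranteed by (P) together with the tree structure of $T_k$ and $T_m$. The main obstacle will be carrying out this case analysis uniformly over the possible colors and orientations of all three edges of $C$ (including the subcase where $P_k(c_i)$ touches $c_j$ through the edge $c_ic_j$ itself); once that is settled, the remainder of the proof is a direct local check using (D1).
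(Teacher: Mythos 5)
Your overall strategy coincides with the paper's: (D1) at interior vertices of $T|_{C}$ is inherited verbatim; the interior edges of $T|_{C}$ at a corner form one contiguous arc, so once outgoing edges are excluded they are all incoming and monochromatic by (D1); the exclusion of an outgoing edge into the interior is by contradiction, following the escaping path $P_{k}(c_i)$ to the corner $c_j$ where it exits and playing it against the edge $c_ic_j$ of $C$; and distinctness of the three corner colours comes from the three internally disjoint paths out of an interior vertex of $T|_{C}$. One caution on the contradiction step: in the subcase where $c_ic_j$ is directed from $c_j$ to $c_i$ with colour $m$, the weak form of (P) (two outgoing paths from a common vertex meet only there) does not close the case from either base point --- $P_m(c_i)$ need not contain $c_j$, and nothing forces $P_k(c_j)$ to meet $P_m(c_j)$ again. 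You need the full statement that $T_{k}^{-}\cup T_{m}^{-}$ contains no directed cycle, which is exactly how the paper finishes this case; your appeal to ``the tree structure'' gestures at this but the reduction to a shared vertex of two paths from one endpoint does not go through as written.

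The genuine gap is at the end. Proving that the corner-to-colour map $\lambda$ is a bijection is not yet (D2): the paper's definition requires the exterior vertices of $T|_{C}$, read in \emph{clockwise} order, to carry colours $1,2,3$. If the corners received colours $1,3,2$ clockwise, your relabelling $c_i\mapsto a_{\lambda(i)}$ would not yield a Schnyder wood, and nothing in your argument rules this out. The paper supplies the missing step: taking an interior vertex $v$ of $T|_{C}$, the rotational order of the outgoing edges at $v$ given by (D1) forces $P_{i}(v),P_{i+1}(v),P_{i-1}(v)$ to exit $C$ at the three corners in the corresponding cyclic order, since otherwise two of these paths would cross and hence meet, contradicting (P). You need to add this (or an equivalent) argument; it is not cosmetic, since Lemma~\ref{lem:cyclic_triangle} and the whole flip machinery rely on the corners of a flipped triangle receiving colours in the correct rotational pattern.
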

  \begin{proof}
    \rpnote{It is clear that the interior vertex property is satisfied
      in $T|_{C}$. We are just left to show the exterior vertex
      property. We may assume that $C$ is cyclically oriented in
      counterclockwise order, say $C=b_{1},b_{2},b_{3}$. The case
      where $C$ is clockwise will follow from a similar argument. We
      show that all interior edges of $T|_{C}$ at $b_{1}$ are incoming
      and have the same colour, the argument for $b_{2}$ and $b_{3}$
      is analogous. Suppose, by contradiction, that there is an
      interior edge of $T|_{C}$ in $S$ having colour $i$ that is
      outgoing from $b_{1}$. Assume $(b_{1},b_{2})$ has colour $j$ in $S$,
      clearly $j\neq i$.  Since $P_{i}(b_{1})$ ends at the exterior
      vertex $a_{i}$ it must
      leave $C$ through $b_{2}$ or $b_{3}$. Clearly $b_{2}\not\in
      P_{i}(b_{1})$, otherwise this would contradict
      property~\ref{item:path_no_common_vertex} for $P_{i}(b_{1})$ and
      $P_{j}(b_{1})$. Let $j'$ be the colour of $(b_{3},b_{1})$ and
      suppose $b_{3}\in P_{i}(b_{1})$. Clearly $j'\neq i$. We can now
      see that $T_{i}^{-}\cup T_{j'}^{-}$ contains a cycle, which
      contradicts property~\ref{item:path_no_common_vertex}.
      Therefore all interior edges of $T|_{C}$ at $b_{1}$ are
      incoming. Furthermore, these edges must be of the same colour or
      else, by the vertex property, there would be an outgoing edge
      towards the interior of $C$. To conclude the proof we show that
      the colours of interior edges of $T|_{C}$ at $b_{1}$, $b_{2}$
      and $b_{3}$ appear in the same cyclic order as the colours of
      the interior edges incident to the exterior vertices $a_{1}$ $a_{2}$ and $a_{3}$. Let
      $v$ be an interior vertex of $T|_{C}$ and suppose, without loss
      of generality, that $P_{i}(v)$ uses $b_{1}$. By
      property~\ref{item:path_no_common_vertex} it follows that
      $P_{i+1}(v)$ and $P_{i-1}(v)$ leave through $b_{3}$ and $b_{2}$
      respectively, otherwise, by the vertex property we would have
      that $P_{i+1}(v)$ and $P_{i-1}(v)$ would cross,
      contradicting~\ref{item:path_no_common_vertex}. This concludes
      the proof.  }
 \end{proof}

\begin{lem}\label{lem:cyclic_triangle}
  Let $T$ be a planar triangulation and let $S$ be a Schnyder wood of
  $T$. If $S$ has a triangle $C$ at which the edges are oriented
  cyclically, then $C$ has an edge of each colour in $S$. Furthermore,
  if $C$ is oriented counterclockwise then the edges along the
  cycle have colours $i$, $i-1$ and $i+1$ respectively.
\end{lem}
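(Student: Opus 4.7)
The plan is to analyze the Schnyder cyclic order at each vertex of $C$ and combine the resulting local constraints. Let $C=v_1v_2v_3$ with edges oriented $v_1\to v_2$, $v_2\to v_3$, $v_3\to v_1$, and assume without loss of generality that $C$ is oriented counterclockwise. Denote the respective colours by $c_{12},c_{23},c_{31}$ (indices mod $3$). I aim to show $c_{12}=c_{23}+1$, and by the cyclic analogues at $v_3$ and $v_1$ obtain $c_{23}=c_{31}+1$ and $c_{31}=c_{12}+1$. These three relations force the colours to be distinct and yield the pattern $(i,i-1,i+1)$ with $i=c_{12}$, which is exactly what the lemma claims.

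The key step is local at $v_2$, which has incoming face-edge $v_2v_1$ of colour $c_{12}$ and outgoing face-edge $v_2v_3$ of colour $c_{23}$. Because $C$ is counterclockwise, walking clockwise around $v_2$ from $v_2v_1$ to $v_2v_3$ sweeps the angular sector of $v_2$ lying inside $C$. I will argue that no outgoing edge of $v_2$ occurs strictly within this sector. If $C$ is a face this is immediate, since the sector contains no other edges. If $C$ is a separating triangle, I invoke Lemma~\ref{coro:mini_schnyder_wood}: the restriction of $S$ to $T|_C$ is a Schnyder wood of $T|_C$ in which $v_2$ is an exterior vertex, so every interior edge of $T|_C$ incident to $v_2$ is incoming at $v_2$ (with a common colour); in particular, no outgoing edge of $v_2$ lives strictly in the interior sector of $C$.

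With this in hand, I read the colour relation directly off property (D1). Writing the (D1) pattern around $v_2$ clockwise gives outgoing~$1$, incoming~$3$'s, outgoing~$2$, incoming~$1$'s, outgoing~$3$, incoming~$2$'s; equivalently, the incoming block immediately preceding (clockwise) the outgoing edge of colour $k$ consists of incoming edges of colour $k+1$. Since no outgoing edge at $v_2$ lies between $v_2v_1$ and $v_2v_3$, the edge $v_2v_1$ must sit inside the incoming block immediately preceding the outgoing edge $v_2v_3$, giving $c_{12}=c_{23}+1$. Repeating the same argument at $v_3$ and $v_1$ completes the determination of the colour pattern. The clockwise case is handled by the symmetric argument, with the interior of $C$ appearing in the counterclockwise sector from the incoming to the outgoing face-edge at each vertex.

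The main obstacle is the separating triangle case. Without Lemma~\ref{coro:mini_schnyder_wood}, there is no obvious reason to rule out an outgoing edge of $v_2$ appearing between $v_2v_1$ and $v_2v_3$ in the interior sector of $C$, which would break the direct reading from the (D1) cyclic pattern. Reducing to the induced Schnyder wood on $T|_C$ is precisely what makes the argument uniform across the facial and separating cases.
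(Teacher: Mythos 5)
Your proof is correct and takes essentially the same route as the paper's: both arguments work locally at each vertex of $C$ using the cyclic order from condition (D1), and both invoke Lemma~\ref{coro:mini_schnyder_wood} to ensure that, when $C$ is a separating triangle, no outgoing edge of a vertex of $C$ lies in the sector interior to $C$. The only (cosmetic) difference is that the paper first proves the three colours are pairwise distinct via property (P) and then eliminates the one remaining bad case, whereas you read the relation $c_{12}=c_{23}+1$ directly off the (D1) block structure and obtain distinctness as a byproduct.
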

 \begin{proof}
   \rpnote{Let $C=b_{1}b_{2}b_{3}$ and suppose the arcs forming $C$ are
     $\alpha_{1}=(b_{1},b_{2})$, $\alpha_{2}=(b_{2},b_{3})$ and
     $\alpha_{3}=(b_{3},b_{1})$. By
     property~\ref{item:path_no_common_vertex} not all arcs have the
     same colour. If two of the arcs share colour $i$, say $\alpha_{1}$
     and $\alpha_{2}$, and $\alpha_{3}$ has colour $j$, then
     $T_{i}^{-}\cup T_{j}^{-}$ contains a cycle, which contradicts
     property~\ref{item:path_no_common_vertex}. Therefore the colours
     of $\alpha_{1}$, $\alpha_{2}$ and $\alpha_{3}$ are pairwise
     distinct. Now, let us assume that $C$ is oriented
     counterclockwise. Suppose without loss of generality that
     $\alpha_{1}$ has colour $i$. It suffices to show that $\alpha_{2}$
     has colour $i-1$ and by rotational symmetry the result will
     follow. Suppose, by contradiction that $\alpha_{2}$ has colour
     $i+1$. If $C$ is a facial triangle, then this contradicts the vertex
     property at vertex $b_{2}$, since an incoming arc in colour $i$
     cannot be followed in clockwise order by an outgoing arc in colour
     $i+1$. Now, if $C$ is a separating triangle, then there are two
     interior edges of $T|_{C}$ that are outgoing from $b_{2}$, and this
     contradicts Lemma~\ref{coro:mini_schnyder_wood}. Therefore
     $\alpha_{2}$ must have colour $i-1$. The result now follows.}
 \end{proof}
}


\lAnote{Let $C$ be a cyclically oriented triangle in $T$.  By Lemma~\ref{lem:cyclic_triangle}, we may assume that} $C=xyz$ oriented counterclockwise with edges $xy, yz, zx$ of
colour $1,3,2$ respectively.
A \emph{clockwise flip} of $C$ alters
the colours and orientations of $S$ as follows:
\begin{enumerate}
\item Edges on the cycle are reversed and colours change from $i$ to
  $i-1$. \Bchange{See triangle $xyz$ in
    Figure~\ref{fig:face_flip_coords}.}
\item Any interior edge of $T|_{C}$ remains with the same orientation
  and changes colour from $i$ to $i+1$.  \Bchange{See edges incident
    to $b$ in Figure~\ref{fig:b_coords}.}
\end{enumerate}
Other edges are unchanged.  The reverse operation is a
\emph{counterclockwise flip}, which Brehm calls a \emph{flop}.
Brehm~\cite[p. 44]{Brehm00} proves that a flip yields another Schnyder
wood.  Consider the graph with a vertex for each Schnyder wood of $T$
and a directed edge $(S,S')$ when $S'$ can be obtained from $S$ by a
clockwise flip.  This graph forms a distributive
lattice~\cite{Brehm00,Felsner04,OssonaDeMendez94}.  Ignoring edge
directions, the distance between two Schnyder woods in this graph is
called their \emph{flip distance}.
\longVerNote{
We now work towards proving that such flip distance is in fact
$O(n^{2}),$ where $n$ is the number of vertices of the planar
triangulation. 

  A triangle in a Schnyder wood $S$ is called \emph{flippable} if it
  is cyclically oriented counterclockwise. \emph{Floppable} triangles
  are triangles whose edges are cyclically oriented clockwise. Let $C$
  be a flippable triangle in $S$ and denote by $S^{C}$ the Schnyder
  wood obtained from flipping $C$ in $S$. We say
  $C_{1},C_{2},\ldots,C_{k}$ is a flip sequence if $C_{1}$ is
  flippable in $S_{1}\coloneqq S$, and $C_{i}$ is flippable in $S_{i}$
  with $S_{i+1}\coloneqq S_{i}^{C_{i}}$ for $1\leq i\leq k-1$. Note
  that if $C_{1},\ldots,C_{k}$ defines a flip sequence, then
  $C_{k},\ldots,C_{1}$ defines a flop sequence. In a 4-connected
  planar triangulation $T$ the flippable triangles are precisely the
  cyclically oriented faces of $T$.

  The following result of Brehm will allow us to derive an upper
  bound for the length of a maximal flip sequence in any planar
  triangulation.

  \begin{pro}[{Brehm~\cite[p. 15]{Brehm00}}]\label{pro:adjacent_faces_bound}
    Let $T$ be a 4-connected planar triangulation, let $S$ be a
    Schnyder wood of $T$ and let $f$ be an interior face of $T$.  If
    there is a flip sequence that contains $f$ at least twice, then
    between any two flips of $f$, all the faces adjacent to $f$ must
    be flipped.
  \end{pro}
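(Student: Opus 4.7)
The plan is to track the orientations of the three boundary edges of $f$ through the flip sequence, exploiting the fact that in a 4-connected planar triangulation there are no separating triangles, so every cyclically oriented 3-cycle (in any intermediate Schnyder wood) is a facial triangle. As a consequence, every flip in the sequence acts on a face of $T$ and reverses exactly the three edges on the boundary of that face.

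First I would reduce to the case of two \emph{consecutive} flips of $f$, say at times $t_1 < t_2$ with no flip of $f$ occurring strictly in between; the general statement follows by chaining this case across successive pairs of flips of $f$. Write $f = xyz$ and let $e_1, e_2, e_3$ denote its boundary edges. Since $f$ is flippable, each $e_k$ is an interior edge of $T$, so it is shared with exactly one other face $f_k$ of $T$; these $f_1, f_2, f_3$ are precisely the three faces adjacent to $f$.

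The main argument is then a parity count on the orientations of $e_1, e_2, e_3$. A flip sequence consists of clockwise flips only, so $f$ is counterclockwise-oriented immediately before both $t_1$ and $t_2$. The flip at $t_1$ reverses every $e_k$, so just after $t_1$ each edge of $f$ points in the opposite direction from before. For $f$ to be counterclockwise-oriented again at $t_2$, each $e_k$ must be reversed an odd number of times during the open interval $(t_1,t_2)$, and in particular at least once. Because $f$ itself is not flipped in this interval, and because every intermediate flip is a face flip reversing exactly the boundary edges of that face, the only mechanism available to reverse $e_k$ is to flip the unique other face $f_k$ containing $e_k$. Hence each $f_k$ is flipped at least once between $t_1$ and $t_2$, which is the desired conclusion.

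The main subtlety, and the place where the 4-connectivity hypothesis is essential, lies in the claim that every edge reversal in the sequence is caused by a facial flip. Without 4-connectivity one would also have flips of separating triangles; although such a flip reverses only its own three boundary edges, an edge of $f$ could lie on the boundary of a separating triangle, which would complicate the count. Ruling this out is exactly what 4-connectivity buys us, after which the parity argument on $e_1, e_2, e_3$ is clean.
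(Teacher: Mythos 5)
The paper states this proposition only as a citation to Brehm and gives no proof of its own, so there is nothing internal to compare against; judged on its own, your argument is correct. The reduction to consecutive flips of $f$, the observation that 4-connectivity forces every flip in the sequence to be a facial flip reversing exactly its three boundary edges, and the parity count showing that each boundary edge of $f$ must change direction (hence the unique other face incident to that edge must be flipped) between consecutive flips of $f$ together constitute a complete and standard proof of the statement.
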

  
  Let us observe that any 
  maximal flip sequence starting at an arbitrary Schnyder wood ends at
  the unique Schnyder wood containing no flippable face. In fact we
  can derive the following bound, as suggested
  in~\cite[p. 15]{Brehm00}.

  \begin{thm}\label{thm:brehm_4_connected_sequence}
    Let $T$ be a 4-connected planar triangulation on $n$ vertices with
    exterior face $f^{*}$. The length of any flip sequence $F$ is
    bounded by the sum of the distances in the dual of $T$ to $f^{*}$,
    that is, \[|F|\leq \sum_{f\in\Fc(T)}\text{d}(f,f^{*})= O(n^{2}).\]
    Furthermore, any maximal flip sequence terminates at $\mathcal{L}$,
    the Schnyder wood containing no flippable triangles.
  \end{thm}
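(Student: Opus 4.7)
My plan is to use Brehm's Proposition~\ref{pro:adjacent_faces_bound} to bound, for each interior face $f$, the number of times $f$ appears in the flip sequence $F$ by its dual-graph distance to $f^{*}$, and then sum over all faces. The termination claim will follow from the distributive lattice structure.

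For each interior face $f$, let $\phi(f)$ denote the number of times $f$ is flipped in $F$, so that $|F|=\sum_{f\in\Fc(T)}\phi(f)$. The key observation is that $f^{*}$ is never flipped, since only interior cyclically oriented triangles can be flipped. The plan is to prove by induction on $d=d(f,f^{*})$ (the distance in the dual) that $\phi(f)\leq d(f,f^{*})$. For the base case $d=1$, face $f$ shares an edge with $f^{*}$, so by Proposition~\ref{pro:adjacent_faces_bound}, between any two flips of $f$ every adjacent face of $f$ (including $f^{*}$) would have to be flipped; since $f^{*}$ is never flipped, $\phi(f)\leq 1$. For the inductive step, any face $f$ at distance $d\geq 2$ has at least one neighbour $g$ in the dual with $d(g,f^{*})=d-1$. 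Proposition~\ref{pro:adjacent_faces_bound} forces $g$ to be flipped at least once between any two flips of $f$, hence $\phi(f)\leq \phi(g)+1\leq(d-1)+1=d$ by the induction hypothesis.

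Summing the bound over all interior faces gives $|F|\leq\sum_{f\in\Fc(T)} d(f,f^{*})$. Since $T$ has $2n-5$ interior faces and the dual of a planar triangulation has diameter $O(n)$, this yields $|F|=O(n^{2})$.

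For the second assertion, note that a flip in $S$ exists precisely when $S$ contains a flippable triangle. Thus a flip sequence is maximal if and only if the terminal Schnyder wood has no flippable triangle. By the distributive lattice structure of the set of Schnyder woods of $T$~\cite{Brehm00,Felsner04,OssonaDeMendez94}, in which clockwise flips correspond to covering relations directed toward the minimum, the unique minimum element is the only Schnyder wood with no outgoing flip, i.e.\ with no flippable triangle. Therefore this minimum is $\mathcal{L}$, and every maximal flip sequence must terminate there.

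The main obstacle, and the step that drives the whole argument, is the inductive bound $\phi(f)\leq d(f,f^{*})$. The subtlety is ensuring that Proposition~\ref{pro:adjacent_faces_bound} applies uniformly: we need $T$ to be 4-connected so that flippable triangles are exactly cyclically oriented interior faces (no separating-triangle flips complicate the picture), and we need the fact that $f^{*}$ is never itself flippable so that the induction actually bottoms out at $d=1$.
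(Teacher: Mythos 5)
Your proof is correct and follows essentially the same route as the paper: Proposition~\ref{pro:adjacent_faces_bound} is used to bound the number of times each face is flipped by its dual distance to $f^{*}$ (the paper states this in one line where you spell out the induction), and termination at $\mathcal{L}$ follows from the lattice structure. The only (harmless) difference is your base case: rather than applying the proposition with $f^{*}$ as the blocking neighbour to get $\phi(f)\leq 1$, the paper observes that a face adjacent to $f^{*}$ contains an uncoloured exterior edge and hence can never be cyclically oriented, so it is flipped zero times.
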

  \begin{proof}
    Observe that any face adjacent to the exterior face cannot be
    flipped. By Proposition~\ref{pro:adjacent_faces_bound} it follows
    that the number of times a face $f$ can be flipped is bounded by
    the distance from $f$ to the exterior face $f^{*}$. Therefore the
    length of any flip sequence is $ O(n^{2})$, as claimed. Clearly
    any maximal flip sequence terminates at the Schnyder wood
    containing no flippable face.
  \end{proof}

  A $4$-connected block of a graph $G$ is a maximal $4$-connected
  induced subgraph of $G$. If $G_{1},\ldots,G_{k}$ denote the
  $4$-connected blocks of a graph $G$ then we say that they define a
  \emph{decomposition of $G$ into $4$-connected blocks.}  One more
  result that will be useful is the following.

  \begin{thm}[{Brehm~\cite[p. 37]{Brehm00}}]\label{thm:brehm_flip_sequence_decomposition}
    Consider a planar triangulation $T$. Let $T_{1},\ldots,T_{k}$ be a
    decomposition of $T$ into $4$-connected blocks. Then any flip
    sequence in $S$ can be obtained by concatenating flip sequences
    $S_{1},S_{2},\ldots,S_{k}$, where each $S_{i}$ is flip sequence in
    $T_{i}$.
  \end{thm}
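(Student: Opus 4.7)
My plan is to proceed by induction on the number of separating triangles of $T$. The base case, when $T$ is $4$-connected, is immediate since $T$ equals its unique $4$-connected block.

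For the inductive step, I would fix a separating triangle $C$ of $T$ and split $T$ into $T_{\mathrm{in}} = T|_C$, having $C$ as its outer face, and $T_{\mathrm{out}} = T\setminus C \cup \{C\}$, having $C$ as an interior face. By Lemma~\ref{coro:mini_schnyder_wood} applied once on the inside and once on the outside, $S$ restricts to a Schnyder wood of $T_{\mathrm{in}}$ (with an appropriate labeling of $x,y,z$ as exterior vertices) and to a Schnyder wood of $T_{\mathrm{out}}$. Every flippable triangle of $T$ is an interior face of exactly one of these two triangulations: a triangle inside $C$ and distinct from $C$ is an interior face of $T_{\mathrm{in}}$, and a triangle outside $C$ or equal to $C$ is an interior face of $T_{\mathrm{out}}$ (note that $C$ cannot be flipped ``in $T_{\mathrm{in}}$'' since it is the outer face there). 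Given a flip sequence $F=(D_1,\ldots,D_m)$ in $S$, I would partition $F$ into two subsequences $F_{\mathrm{in}}$ and $F_{\mathrm{out}}$ by this attribution.

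Next I would verify that $F_{\mathrm{out}}$ is a valid flip sequence in $T_{\mathrm{out}}$. When $D_j\in F_{\mathrm{out}}$ is flipped in $T$, the flip rule affects only edges of $D_j$ and edges strictly inside $D_j$; every such affected edge lies in $T_{\mathrm{out}}$ (for $D_j=C$ only the three edges of $C$ are in $T_{\mathrm{out}}$, which is exactly what a facial flip of $C$ in $T_{\mathrm{out}}$ alters). Moreover, flips in $F_{\mathrm{in}}$ affect only edges strictly inside $C$, so they leave the $T_{\mathrm{out}}$-restriction of the current Schnyder wood untouched. Hence $F_{\mathrm{out}}$ performed on $T_{\mathrm{out}}$ traces exactly the $T_{\mathrm{out}}$-restrictions of the Schnyder woods encountered in $F$, and in particular each $D_j\in F_{\mathrm{out}}$ is cyclically oriented in $T_{\mathrm{out}}$ at the moment of its flip. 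The analogous statement for $F_{\mathrm{in}}$ holds with one subtlety: a flip of $C$ in $F_{\mathrm{out}}$ does not change any edge orientation inside $C$, but shifts the colours of all interior edges of $T|_C$ by $+1$, which amounts to a cyclic relabeling of the exterior vertices $x,y,z$ of $T_{\mathrm{in}}$. Since cyclic orientation of a triangle depends only on edge directions and the flip rule itself is equivariant under this $+1$ colour shift, the relabeling neither disables future flips in $F_{\mathrm{in}}$ nor distorts their effect; thus $F_{\mathrm{in}}$ is a valid flip sequence in $T_{\mathrm{in}}$ (viewed up to cyclic relabeling of its exterior vertices, which is the natural equivalence on Schnyder woods of a block).

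Applying the inductive hypothesis to $T_{\mathrm{in}}$ and $T_{\mathrm{out}}$ individually decomposes $F_{\mathrm{in}}$ and $F_{\mathrm{out}}$ into flip sequences in their respective $4$-connected blocks. Since the blocks of $T$ are precisely the union of the blocks of $T_{\mathrm{in}}$ and $T_{\mathrm{out}}$, concatenating these decompositions gives the claimed decomposition of $F$. The main obstacle I anticipate is the bookkeeping around the cyclic colour relabelings induced by separating-triangle flips of containing triangles, i.e., verifying carefully that each $F_i$ in a nested block remains a genuine flip sequence despite these relabelings; this is where the equivariance of the flip operation under cyclic colour shifts is essential, and it may be cleanest to formalise by working in the quotient of Schnyder woods modulo cyclic relabeling of exterior vertices of each block.
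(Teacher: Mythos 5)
The paper offers no proof of this theorem: it is imported verbatim from Brehm's thesis and used as a black box, so there is nothing in the paper to compare your argument against. Judged on its own, your divide-and-conquer strategy is the right one: split at a single separating triangle $C$, partition the flip sequence into the flips lying properly inside $C$ and the flips lying outside or equal to $C$, check that each part is a genuine flip sequence in its piece, and recurse. The observation that a flip of $C$ (or of any triangle containing $C$) acts on the interior of $T|_{C}$ only as a cyclic shift of colours, which commutes with the flip rule and leaves cyclic orientation untouched, is exactly the point that makes the recursion go through, and you have identified it correctly.

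There is one step you assert that actually carries the weight of the whole decomposition and needs an argument: the claim that a flip in $F_{\mathrm{in}}$ ``affects only edges strictly inside $C$,'' so that it cannot disturb the restriction of the wood to $T_{\mathrm{out}}$. A priori a cyclically oriented triangle $D\neq C$ of $T|_{C}$ could share an edge of $C$, and flipping it would then reverse and recolour that edge, which belongs to $T_{\mathrm{out}}$. This cannot happen, but you must say why: by Lemma~\ref{coro:mini_schnyder_wood} (applied to each intermediate Schnyder wood), every interior edge of $T|_{C}$ incident to a vertex of $C$ is directed \emph{into} that vertex; hence a triangle $D\neq C$ of $T|_{C}$ that meets $V(C)$ has either two incoming edges at a vertex of $C$ or two outgoing edges at its unique interior vertex, and so is never cyclically oriented. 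Therefore every flippable triangle properly inside $C$ is vertex-disjoint from $C$ and its flip touches only interior edges of $T|_{C}$. With that supplied --- and reading ``concatenating'' in the statement as partitioning the sequence into interleaved subsequences, one per block, which is all that the application in Lemma~\ref{lem:lattice_not_4_connected_sequence} requires --- your proof is sound.
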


  We now prove that the flip distance between any two Schnyder woods
  of a planar triangulation on $n$ vertices is $O(n^{2})$.
  \Achange{We have attributed the following result to
    Brehm~\cite{Brehm00}, but he does not state it as a single result.
  It helps to read Miracle et al.~\cite{Miracle} and Eppstein et
  al.~\cite{Eppstein}.}

\begin{lem}
  \label{lem:lattice_not_4_connected_sequence}
  In a planar triangulation on $n$ vertices the flip distance 
  between any two Schnyder woods 
  is $O(n^{2})$, \rnote{and a flip sequence of that length can be found in linear time per flip.} 
\end{lem}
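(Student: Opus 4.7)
The plan is to combine Theorem~\ref{thm:brehm_flip_sequence_decomposition} with Theorem~\ref{thm:brehm_4_connected_sequence}. First I would decompose $T$ into its $4$-connected blocks $T_1,\ldots,T_k$ with $n_i:=|V(T_i)|$. The blocks share only the vertices of separating triangles of $T$, and since each internal vertex of $T$ lies in a unique block, we have $\sum_i n_i = O(n)$. By Theorem~\ref{thm:brehm_flip_sequence_decomposition}, any flip sequence in $T$ is a concatenation of flip sequences in the blocks, so the flip distance in $T$ between two Schnyder woods $S, S'$ equals the sum over $i$ of the flip distances between the restrictions of $S$ and $S'$ to $T_i$. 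Hence it suffices to bound the flip distance in each $T_i$ by $O(n_i^2)$, since $\sum_i n_i^2 \leq \bigl(\sum_i n_i\bigr)^2 = O(n^2)$.

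For a single $4$-connected block $T_i$, I would appeal directly to Theorem~\ref{thm:brehm_4_connected_sequence}: any maximal flip sequence starting at any Schnyder wood of $T_i$ terminates at the unique minimum element $\mathcal{L}_i$ of the lattice (the Schnyder wood with no flippable triangle) in at most $\sum_{f\in\Fc(T_i)} d(f, f^*_i) = O(n_i^2)$ flips. To connect two Schnyder woods $S_i$ and $S_i'$ of $T_i$, I would run a maximal flip sequence from $S_i$ to $\mathcal{L}_i$ and then the reverse of a maximal flip sequence from $S_i'$ to $\mathcal{L}_i$ (that is, a flop sequence from $\mathcal{L}_i$ to $S_i'$). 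Gluing these per-block sequences via Theorem~\ref{thm:brehm_flip_sequence_decomposition} gives the desired $O(n^2)$ bound on the flip distance between $S$ and $S'$ in $T$.

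For the algorithmic claim, I would store the Schnyder wood as an orientation and $3$-colouring on the adjacency-list representation of $T$, together with the block decomposition, computed once in linear time. At each step of the procedure, I would scan the $O(n)$ candidate triangles (faces of each block) to locate one that is cyclically oriented; by Lemma~\ref{lem:cyclic_triangle} such a triangle has one edge of each colour, so the test is $O(1)$ per triangle. Performing the flip requires reversing three arcs and, in the separating case, recolouring the interior edges of $T|_C$, which is $O(n)$ in total. The main obstacle is guaranteeing that the greedy procedure does not revisit configurations: this is handled because flipping (never flopping) moves strictly downward in the distributive lattice of Schnyder woods, so the sequence from each $S_i$ to $\mathcal{L}_i$ terminates within the bound from Theorem~\ref{thm:brehm_4_connected_sequence}, yielding $O(n)$ time per flip overall.
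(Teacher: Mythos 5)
Your proposal is correct and follows essentially the same route as the paper: decompose into $4$-connected blocks, bound each block's maximal flip sequence by Theorem~\ref{thm:brehm_4_connected_sequence}, glue via Theorem~\ref{thm:brehm_flip_sequence_decomposition}, route both Schnyder woods through the minimum element $\mathcal{L}$ of the lattice, and observe that each flip costs $O(n)$ time. The only cosmetic difference is that you bound $\sum_i n_i^2$ by $\bigl(\sum_i n_i\bigr)^2$ directly, where the paper proves a small maximization claim to the same effect.
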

\begin{proof}
  \wsnote{We begin by showing that the length of a maximal flip
    sequence is $O(n^{2})$.} Let $T$ be a planar triangulation on
  $n$ vertices. Consider a decomposition of $T$ into $4$-connected
  blocks $T_{1},\ldots,T_{k}$. Denote by $n_{i}$ the number of
  interior vertices of $T_i$, $1\leq i\leq k$. Therefore we have
  $3+\sum_{i=1}^{k}n_{i}=n$. Observe that the length of a maximal
    flip sequence in the lattice of Schnyder woods of $T_{i}$ is
    $ O(n_{i}^{2})$ from
    Theorem~\ref{thm:brehm_4_connected_sequence}. By
    Theorem~\ref{thm:brehm_flip_sequence_decomposition} the length of
    a maximal flip sequence in the lattice of Schnyder woods of $T$ is
    $ O(\sum_{i=1}^{k}n_{i}^{2})$. We now have the following standard
    claim.
  
    \begin{claim}
      The value of $\sum_{i=1}^{k}n_{i}^{2}$ subject to
      $\sum_{i=1}^{k}n_{i}=n-3$ and $n_{i}\geq 0$, is maximized when
      exactly one of the $n_{i}$ is equal to $n-3$ and all others are
      zero.
    \end{claim}
    \begin{proof}[Proof of claim]\renewcommand{\qedsymbol}{}
      Suppose at least two terms are non zero say $0<n_{1}\leq
      n_{2}$. Let us show how to increase the value of the sum.
      \[
      \begin{split}
        (n_{1}-1)^{2}+(n_{2}+1)^{2}+\sum_{i=3}^{k}n_{i}^{2}&=2(n_{2}-n_{1}+1)+\sum_{i=1}^{k}n_{i}^{2}\\
        &> \sum_{i=1}^{k}n_{i}^{2}.
      \end{split}
      \]
      So the claim holds.
    \end{proof}
    
    A consequence of the claim is that
    $ O(\sum_{i=1}^{k}n_{i}^{2})= O(n^{2})$ and therefore the length
    of a maximal flip sequence in the lattice of Schnyder woods of $T$
    is $ O(n^{2})$. 

    \wsnote{Now, since any maximal flip sequence terminates at the
      Schnyder wood $\mathcal{L}$ that contains no flippable triangle,
      this yields a walk through $\mathcal{L}$ of length $O(n^{2})$
      between any two Schnyder woods $S$ and $S'$.

\lAnote{Finally we prove that the flip sequence can be found in linear time per flip.
     If Schnyder woods $S$ and $S'$ differ by a flip,}
      we can obtain $S'$ from $S$ by reversing $O(1)$ arcs and by
      updating the colour of $O(n)$ arcs. The result now follows.}
  \end{proof}
}

\remove{
\rfnote{We note that Lemma~\ref{lem:lattice_not_4_connected_sequence}
  is not stated as a single result in Brehm~\cite{Brehm00}.  It helps
  to read Miracle et al.~\cite{Miracle} and Eppstein et
  al.~\cite{Eppstein}.}
  }

\remove{ \change{In general, there are planar triangulations which
    admit exponentially many Schnyder woods as shown
    in~\cite{Felsner07}}. It has been proved that the set of Schnyder
  woods is a distributive
  lattice~\cite{Brehm00,Felsner04,OssonaDeMendez94}. In this
  distributive lattice, the basic operation to traverse it is by
  reversing cyclically oriented triangles and changing colours
  appropriately, we call this operation a \emph{flip} and \rpnote{it
    is described} at the end of this subsection. We provide a bound on
  the distance between any two Schnyder woods in the lattice.

  \remove{ We begin by introducing \emph{Schnyder labellings} and some
    of their properties in order to use them in the context of
    Schnyder woods.  Let $T$ be a planar triangulation. A
    \emph{Schnyder labelling} of $T$ is a labelling of the interior
    angles of $T$ with labels 1, 2 and 3 such that the following two
    conditions hold.
\begin{itemize}
\item[(Q1)] At any interior vertex $v$, the angles at $v$ form 3 non
  empty intervals. The labels in these intervals are 1, 2 and 3 in
  clockwise order.
\item[(Q2)]\label{item:sch_label_2} The angles of an interior face are
  labelled 1, 2 and 3 in clockwise order.
\end{itemize}

These labellings were first introduced by Schnyder
in~\cite{Schnyder89}, and there is a bijection between the set of
Schnyder labellings and Schnyder woods~\cite{Felsner08}, see
Figure~\ref{fig:sch_lab_sch_wood}. We now state one useful property of
Schnyder labellings.
\begin{figure}[!ht]
  \centering
  \input{figs/labeling_wood_bijection.tex}
  \caption{Bijection between Schnyder labellings and Schnyder woods.}
  \label{fig:sch_lab_sch_wood}
\end{figure}
}

\rpnote{Let us begin by stating two properties regarding cyclically
  oriented triangles. Let $T$ be a planar triangulation equipped with
  a Schnyder wood $S$ and let $C$ be a cyclically oriented triangle.
\begin{description}
\item [(S1)\label{lem:cyclic_triangle}] The triangle $C$ has an edge
  of each colour in $S$. Furthermore, if $C$ is oriented
  counterclockwise then the edges along $C$ have colours $i$, $i-1$,
  $i+1$.
\item [(S2)\label{coro:mini_schnyder_wood}] If $C$ is a separating
  triangle, then the restriction of $S$ to
  the interior edges of $T|_{C}$ is a Schnyder wood of $T|_{C}$.
\end{description}
\rfnote{Why are we proving these?  We should just refer to Schnyder or
  Felsner.}  }


\rpnote{Using notation as above for $S$ and $C$ we will describe how
  to obtain another Schnyder wood $S'$.}  Assume that the
edges of $C$ are directed counterclockwise. 
\rpnote{By Lemma~\ref{lem:cyclic_triangle} we may assume that}
the directed edges in $S$ are $(x,y), (y,z)$ and $(z,x)$, and that
they are coloured $1$, $3$ and $2$ respectively. We obtain $S'$ from
$S$ in the following way:
\begin{enumerate}
\item Every edge having an endpoint outside $xyz$ remains unchanged.
\item Edges on the cycle are reversed and colours change from $i$ to
  $i-1$. 
\item Any interior edge of $T|_{C}$ remains with the same orientation
  and changes colour from $i$ to $i+1$.
\end{enumerate}
We call such procedure a \emph{flip} of the triangle $xyz$. We define
a \emph{flop} to be the analogous procedure applied to a clockwise
directed triangle. The fact that performing a flip or a flop yields
another Schnyder wood is a particular case of a result proven by
Brehm~\cite[p. 44]{Brehm00}.
 
\remove{
\begin{thm}[Brehm~\cite{Brehm00}]\label{thm:brem_flip}
Let $T$ be a planar triangulation, let $S$ be a Schnyder wood of $T$
such that $S$ contains a triangle which is oriented
counterclockwise and let $S'$ be as described above. Then $S'$ is a
Schnyder wood of $T$.   
\end{thm}}

The previous \rpnote{construction defines the basic operation to
  traverse the lattice of Schnyder woods.} 
 The next result gives a bound on the distance between
any two Schnyder woods in the Schnyder wood lattice. 
\begin{lem}\label{lem:lattice_not_4_connected_sequence}
  Let $T$ be a planar triangulation on $n$ vertices. The distance in
  the Schnyder wood lattice between any two Schnyder woods is
  $ O(n^{2})$.
\end{lem}

\rfnote{We note that Lemma~\ref{lem:lattice_not_4_connected_sequence}
  is not stated as a single result in Brehm~\cite{Brehm00}.  Clearer
  descriptions of Brehm's results were given by Miracle et al.~\cite{}
  and Eppstein et al.~\cite{}.}

}

\section{\pnote{Main result}}
\label{sec:main}




\begin{thm}
\label{thm:main}
Let $T$ be a planar triangulation and let $S$ and $S'$ be two Schnyder
woods of $T$. Let $\Gamma$ and $\Gamma'$ be weighted Schnyder
drawings of $T$ obtained from $S$ and $S'$ together with some weight
distributions. 
There
exists a sequence of straight-line planar drawings of $T$ $\Gamma =
\Gamma_{0},\ldots,\Gamma_{k+1} = \Gamma'$ such that $k$ is $O(n^{2})$,
the linear morph $\langle \Gamma_{i},\Gamma_{i+1}\rangle$ is planar,
$0\leq i\leq k$, and the vertices of each $\Gamma_{i}$, $1\leq i\leq
k$, lie in \rnote{a $(6n-15) \times (6n-15)$ grid.}  Furthermore,
these drawings can be obtained in polynomial time.
\end{thm}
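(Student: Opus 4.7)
The plan is to realize the morph in three phases that match the paper's section structure. In Phase~1, I morph $\Gamma$ to a weighted Schnyder drawing $\Gamma_1$ using the Schnyder wood $S$ and a canonical integer weight distribution $\mathbf{w}_S$; by the result of Section~\ref{sec:morph_weight_distribution}, this is realized by a single linear planar morph. In Phase~2, I fix a flip sequence $S = S_0, S_1, \ldots, S_m = S'$ of length $m = O(n^2)$ in the Schnyder wood lattice (using Lemma~\ref{lem:lattice_not_4_connected_sequence}) and for each flip $S_j \to S_{j+1}$ I insert the $O(1)$ linear morphs furnished by Sections~\ref{sec:face_morph} and~\ref{sec:separating_triangle_morph} (one linear morph for a facial flip, three for a separating triangle flip). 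In Phase~3, I morph from the final drawing $\Gamma_{k}$, which has Schnyder wood $S'$, to $\Gamma'$ by another weight-change morph.

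For the step count: Phase~2 contributes $O(m) = O(n^2)$ linear morphs since each flip costs at most~$3$ morphs, and Phases~1 and~3 contribute constantly many more, so $k = O(n^2)$ as required. Planarity of every linear step is guaranteed by the theorems invoked from Sections~\ref{sec:morph_weight_distribution}--\ref{sec:separating_triangle_morph}. Polynomial running time follows because each flip can be identified and its coordinate changes computed in linear time, as already stated in Lemma~\ref{lem:lattice_not_4_connected_sequence}, and because the weight-change morph only requires specifying two endpoints per vertex.

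The grid bound requires a little care and is where I expect the main bookkeeping. For any Schnyder wood $S_j$, I fix an integer weight distribution on $\Fc(T)$ summing to $6n-15 = 3(2n-5)$, e.g.\ the uniform weights of value $3$ on every interior face. By Theorem~\ref{thm:schnyder_embedding} applied to~\eqref{eq:weighted_drawing}, every vertex of the resulting drawing has coordinates in $\{0,1,\ldots,6n-15\}$. The flip morphs of Sections~\ref{sec:face_morph} and~\ref{sec:separating_triangle_morph} go between two such drawings with the same uniform weight~$3$; I must verify that the one or three intermediate drawings they produce also have integer coordinates in the same grid. For the facial flip, the change is local (one interior face's worth of weight redistributes among three regions, see Figure~\ref{fig:face_flip_coords} and Figure~\ref{fig:b_coords}), so coordinates shift by at most $3$ and remain integral; the separating triangle flip should decompose into facial flips inside and outside the triangle, again with integer intermediate positions.

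The main obstacle is coordinating Phases~1 and~3 with the grid requirement: the initial and final weight distributions $\mathbf{w}$ and $\mathbf{w}'$ in the hypothesis are arbitrary positive weights, so $\Gamma$ and $\Gamma'$ themselves need not lie on the $(6n-15)\times(6n-15)$ grid, and the theorem only asks the \emph{intermediate} drawings $\Gamma_1,\ldots,\Gamma_k$ to lie on the grid. Thus Phase~1 moves from the arbitrary $\Gamma$ to the canonical grid drawing for $S$, and Phase~3 moves from the canonical grid drawing for $S'$ back out to~$\Gamma'$; both are single linear planar morphs by Section~\ref{sec:morph_weight_distribution}, and the grid constraint applies only to the flip sequence in between, where by construction every drawing is canonical.
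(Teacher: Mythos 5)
Your proposal follows essentially the same route as the paper: a weight-change linear morph at each end (Lemma~\ref{lem:morph_weight_distribution}), an $O(n^2)$ flip sequence in the middle (Lemma~\ref{lem:lattice_not_4_connected_sequence}), each flip realized by Theorem~\ref{thm:morph_face_flip} or Theorem~\ref{thm:morph_separating_triangle}, and the grid bound obtained from uniform integer face weights summing to $6n-15$. Two small inaccuracies, neither fatal: in a facial flip the coordinates of a moving vertex shift by $\delta_{1}(yz)+\mathbf{w}(f)$, i.e.\ by the weight of an entire region, not by at most $3$ (only integrality matters, and that still holds); and the separating-triangle morph does not ``decompose into facial flips inside and outside the triangle''---it works by first morphing to an integer weight distribution with $\delta_{1}=\delta_{2}=\delta_{3}$, flipping, and morphing the weights back, and the grid membership of the two intermediate drawings is already part of the statement of Theorem~\ref{thm:morph_separating_triangle}, so you should simply cite it rather than sketch a different mechanism.
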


\rnote{We now describe how the results in the upcoming
  sections prove the theorem.
  Lemma~\ref{lem:morph_weight_distribution}
  (Section~\ref{sec:morph_weight_distribution}) proves that if we
  perform a linear morph between two weighted Schnyder drawings that
  differ only in their weight distribution then planarity is
  preserved.
Thus, we may take $\Gamma_{1}$ and $\Gamma_{k}$ to be the
drawings obtained from the uniform weight
distribution 
on $S$ and $S'$ respectively.  By Schnyder's
Theorem~\ref{thm:schnyder_embedding} these drawings lie on a $(2n-5)
\times (2n-5)$ grid and we can scale them up to our larger grid.  By
Lemma~\ref{lem:lattice_not_4_connected_sequence}
(Section~\ref{sec:Schnyder})
there is a sequence of $k$ flips,
$k \in O(n^{2})$, that converts $S$ to $S'$.
Therefore it suffices to show that each flip in the sequence can be
realized via a planar morph composed of a constant number of linear
morphs.
In Theorem~\ref{thm:morph_face_flip} (Section~\ref{sec:face_morph})
we prove that if we perform a linear morph  
between two weighted Schnyder drawings
that differ only by a flip of a face then planarity is preserved.  
%
In Theorem~\ref{thm:morph_separating_triangle}
(Section~\ref{sec:separating_triangle_morph}) we prove that if two
Schnyder drawings with the same uniform weight distribution differ by
a flip of a separating triangle then there is a planar morph between
them composed of three linear morphs.  The intermediate drawings
involve altered weight distributions (here
Lemma~\ref{lem:morph_weight_distribution} is used again), and lie on a
grid of the required size.
%
Putting these results together gives the final sequence
$\Gamma_{0},\ldots,\Gamma_{k+1}$.  All the intermediate drawings
lie in a $(6n-15) \times (6n-15)$ grid and each of them can be
obtained in $O(n)$ time from the previous one.  This completes the proof of
Theorem~\ref{thm:main} modulo the proofs in the following sections.  }

\remove{
We make one remark about the complications caused by separating
  triangles.  It is tempting to simply combine morphs of the outside
  and the inside of each separating triangle.  Essentially, this
  assigns coordinates to each vertex relative to the Schnyder wood of
  the 4-connected component containing the vertex.  The \Achange{trouble} 
  with this approach is that separating triangles can be nested
  $c\cdot n$ deep, and we lose control of the grid size. 
  } 

\remove{ \bnote{An alternative to using
    Theorem~\ref{thm:morph_separating_triangle} when flipping a
    separating triangle is to assign to each vertex the coordinates
    relative to the Schnyder wood of the 4-connected block they are
    contained in. Using this technique allows us to perform this kind
    of morph in one step and showing that morphing linearly preserves
    planarity follows from Theorem~\ref{thm:morph_face_flip}. However,
    using this method, the size of the grid may no longer be $
    O(n)\times O(n).$} }

\section{\rfnote{Morphing to change weight distributions}}
\label{sec:morph_weight_distribution}

\remove{
Consider two weighted Schnyder drawings $\Gamma$ and
  $\Gamma'$ that differ only in their weight distributions,
  i.e.,~$\Gamma$ and $\Gamma'$ are obtained from the same graph with
  the same Schnyder wood but with different weight distributions.
  (Recall that weight distributions were defined in
  Section~\ref{sec:Schnyder}.)
  In this section we show that the linear morph from $\Gamma$ to $\Gamma'$ preserves planarity.
  }

\longVerNote{
  In this section we give the first ingredient of our main result.  We
  consider two weighted Schnyder drawings $\Gamma$ and $\Gamma'$ that
  differ only in their weight distributions, i.e.,~$\Gamma$ and
  $\Gamma'$ are obtained from the same graph with the same Schnyder
  wood but with different weight distributions.  (Recall that weight
  distributions were defined in Section~\ref{sec:Schnyder}.)  In the
  following result we show that the linear morph from $\Gamma$ to
  $\Gamma'$ preserves planarity.
}

\begin{lem} 
\label{lem:morph_weight_distribution}
  Let $T$ be a planar triangulation and let $S$ be a Schnyder wood of
  $T$. Consider two weight distributions $\mathbf{w}$ and
  $\mathbf{w}'$ \pnote{on the faces of $T$}, and denote by $\Gamma$
  and $\Gamma'$ the weighted Schnyder drawings of $T$ obtained from
  $\mathbf{w}$ and $\mathbf{w}'$ respectively.  Then the linear morph
  $\langle\Gamma,\Gamma'\rangle$ is planar.
\end{lem}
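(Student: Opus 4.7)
The plan is to observe that the map from a weight distribution to the resulting vertex positions is affine, so that linear interpolation of vertex positions corresponds exactly to linear interpolation of weights, which remains a valid weight distribution.

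First I would set, for each $t \in [0,1]$, the interpolated weight distribution
\[
\mathbf{w}_{t} = (1-t)\mathbf{w} + t\mathbf{w}'.
\]
Since $\mathbf{w}$ and $\mathbf{w}'$ both assign strictly positive weights to each interior face and both sum to $2n-5$, the convex combination $\mathbf{w}_{t}$ is also strictly positive on every interior face and sums to $2n-5$ for every $t \in [0,1]$. Hence $\mathbf{w}_{t}$ is itself a weight distribution on the faces of $T$.

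Next I would use formula~\eqref{eq:weighted_drawing} for the coordinates of each interior vertex $v$. Because the regions $R_{i}(v)$ depend only on the Schnyder wood $S$ (which is fixed throughout the morph) and not on the weights, we have
\[
v_{i}(t) = \sum_{f \in R_{i}(v)} \mathbf{w}_{t}(f) = (1-t)\sum_{f \in R_{i}(v)}\mathbf{w}(f) + t\sum_{f \in R_{i}(v)}\mathbf{w}'(f) = (1-t)v_{i}(0) + t\, v_{i}(1).
\]
Thus the coordinates of $v$ in the drawing $\Gamma^{t}$ obtained from $\mathbf{w}_{t}$ are exactly the convex combination of its coordinates in $\Gamma$ and $\Gamma'$. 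The exterior vertices are fixed at $(2n-5)e_{i}$ in both drawings and hence throughout the interpolation. Consequently $\Gamma^{t}$ is precisely the drawing obtained along the linear morph $\langle \Gamma, \Gamma'\rangle$ at time $t$.

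Finally, by Theorem~\ref{thm:schnyder_embedding} together with Dhandapani's extension to arbitrary positive weights, the drawing obtained from $S$ and the weight distribution $\mathbf{w}_{t}$ is a straight-line planar drawing of $T$ for every $t \in [0,1]$. Therefore every intermediate drawing of $\langle \Gamma, \Gamma'\rangle$ is planar, proving the lemma. There is no real obstacle here: the whole argument rests on the linearity of equation~\eqref{eq:weighted_drawing} in the weights and on the fact that the space of positive weight distributions summing to $2n-5$ is convex.
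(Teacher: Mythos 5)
Your proposal is correct and follows essentially the same argument as the paper: interpolate the weight distributions, note that each interpolant is a valid weight distribution yielding a planar drawing by Dhandapani's result, and use the linearity of the coordinate formula~\eqref{eq:weighted_drawing} to identify this family of drawings with the linear morph $\langle\Gamma,\Gamma'\rangle$. No gaps.
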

\longVerNote{
\begin{proof}
  Consider the family of functions $\{\mathbf{w}^{t}\}_{t\in[0,1]}$,
  defined by
  $\mathbf{w}^{t}(f) = (1-t) \mathbf{w}(f) + t\mathbf{w}'(f).$
  \pnote{Note that for every $t\in[0,1]$, the function
    $\mathbf{w}^{t}$ is a weight distribution since
    $\mathbf{w}^{t}(f)$ is positive for all $f$ and $\sum_f
    \mathbf{w}^{t}(f) = 2n-5$.}
  \pnote{By Dhandapani~\cite{Dhandapani10} each $\mathbf{w}^{t}$
    yields a planar drawing. }  This family of drawings defines a
  \pnote{planar} morph from $\Gamma$ to $\Gamma'$. To conclude the
  proof we \pnote{only need to}
  show that this morph is linear.
  \pnote{
The
  position of vertex $x$ at time $t$ is
  $x^{t}=(x_{1}^{t},x_{2}^{t},x_{3}^{t})$ where 
$x_{i}^{t}=\sum_{f\in R_{i}(x)}\mathbf{w}^{t}(f)$.  
Note that $x^{t}= (1-t)x^{0}+tx^{1}$, so the result now follows.}
\end{proof}
}

\section{\rfnote{Morphing to flip a facial triangle}}
\label{sec:face_morph}

 In this section we prove that the linear morph from one
  Schnyder drawing to another one\wsnote{,} obtained by flipping a
  cyclically oriented face and keeping the same weight distribution,
  preserves planarity.  
  \lAnote{This is Theorem~\ref{thm:morph_face_flip} below.}
  See Figure~\ref{fig:anim_3_steps}.  We begin by showing how the
  regions for each vertex change during such a flip and then we use
  this to show how the
  coordinates change.

\begin{figure}
  \centering
  \scalebox{.62}{\input{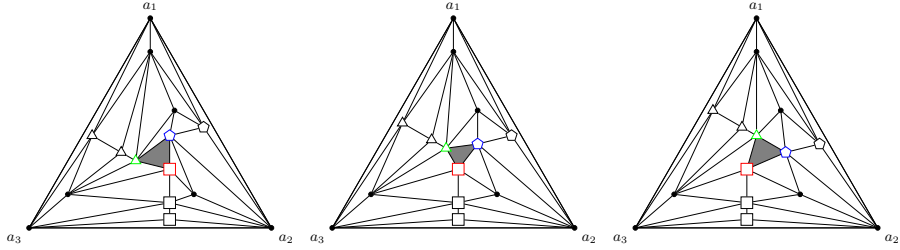} }
  \caption{Snapshots from a linear morph defined by a flip of the shaded face at times $t=0$, $t=0.5$
and $t=1.$ \wsnote{The trajectory of rectangular shaped vertices is parallel
to the exterior edge $a_{2}a_{3}$. Similar properties hold for triangular and pentagonal
shaped vertices.}}
  \label{fig:anim_3_steps}
\end{figure}

\remove{This section is devoted to the analysis of flips of faces in
  Schnyder woods. We begin by investigating how the regions for each
  vertex change and use this to derive the variation of the
  barycentric coordinates induced by the Schnyder wood. We finalize
  the section by proving that morphing linearly from one Schnyder
  drawing to another one obtained by flipping a cyclically oriented
  face, both drawings given by a fixed weight distribution, defines a
  planar morph.}

\rrnote{Let $S$ and $S'$ be Schnyder woods of triangulation $T$ that
  differ by a flip on face $xyz$ oriented counterclockwise in $S$ with
  $(x,y)$ of colour 1.  Let $(v_{1},v_{2},v_{3})$ and
  $(v_{1}',v_{2}',v_{3}')$ be the coordinates of vertex $v$ in the
  weighted Schnyder drawings from $S$ and $S'$ respectively with
  respect to weight distribution $\mathbf{w}$.}  \remove{\rrnote{skip
    this:} Let us begin by introducing some notation. Consider a
  planar triangulation $T$. \mchange{Let $S$ and $S'$ be Schnyder
    woods that differ by a face flip, say face $xyz$. We may assume
    without loss of generality that $f$ is cyclically oriented
    counterclockwise in $S$ and that $(x,y)\in S$ has colour 1.
    Fix a weight distribution $\mathbf{w}$ and let
    $(v_{1},v_{2},v_{3})$ and $(v_{1}',v_{2}',v_{3}')$ denote the
    coordinates for vertex $v$ in the weighted Schnyder drawings from
    $S$ and $S'$ respectively.}}
For an interior edge $pq$ of $T$, let $\Delta_{i}(pq)$ be the set of
faces \pnote{in the region} bounded by $pq$ and the paths $P_{i}(p)$
and $P_{i}(q)$ in $S$, \nnote{and define $\delta_{i}(pq)$ to be the
  weight of that region,
  i.e.,~$\delta_{i}(pq)=\sum_{f\in\Delta_{i}(pq)}\mathbf{w}(f)$.}
\pnote{We use notation $P_i(v), R_i(v)$, and $D_i(v)$ as defined in
  Section~\ref{sec:Schnyder} and $\Delta_{i}(pq)$ as above and add
  primes to denote the corresponding structures in $S'$.}
%

\lAnote{Figure~\ref{fig:face_flip_coords} shows how the regions change between the weighted Schnyder drawings from  $S$ and $S'$. 
Observe that the outgoing paths $P_2(x)$ and $P_3(x)$ from $x$ do not change, so region $R_1(x)$ is unchanged.  Outgoing path $P_1(x)$ changes, so there are some faces, in particular $\Delta_{1}(yz)\cup\{f\}$, that leave $R_2(x)$ and join $R'_3(x)$.  We capture these properties in the following lemma.}

\begin{lem}\label{lem:sch_woods_before_and_after}
The following conditions hold: 
  \begin{enumerate}
  \item $R_{1}(x)=R_{1}'(x)$, $R_{3}(y)=R_{3}'(y)$ and
    $R_{2}(z)=R_{2}'(z)$.     
    \pnote{ \item $R'_2(x) = R_2(x) \setminus (\Delta_1(yz) \cup
      \{f\}), R'_3(x) = R_3(x) \cup (\Delta_1(yz) \cup \{f\})$ and
      similarly for $y$ and $z$.}    
  \item $D_{1}(x)=D_{1}'(x)$, $D_{2}(z)=D_{2}'(z)$ and
    $D_{3}(y)=D_{3}'(y)$. 
  \end{enumerate}
\end{lem}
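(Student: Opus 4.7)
The plan is to catalogue the exact edge changes produced by the flip, use them to determine which outgoing paths $P_i(v)$ change, and then deduce each of the three claims in turn.

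First I would enumerate the edge changes. Because $T|_C$ consists only of the face $f=xyz$, a facial flip modifies only the three cycle edges: $(x,y)$ of colour $1$ becomes $(y,x)$ of colour $3$, $(y,z)$ of colour $3$ becomes $(z,y)$ of colour $2$, and $(z,x)$ of colour $2$ becomes $(x,z)$ of colour $1$. Translated into \emph{outgoing} edges at each vertex, the only changes are: $x$'s outgoing colour-$1$ edge switches from $(x,y)$ to $(x,z)$; $y$'s outgoing colour-$3$ edge switches from $(y,z)$ to $(y,x)$; $z$'s outgoing colour-$2$ edge switches from $(z,x)$ to $(z,y)$. Every other outgoing edge, at every vertex of $T$, is identical in $S$ and $S'$. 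Consequently $P_i(v)=P'_i(v)$ except for the three pairs $(x,1),(y,3),(z,2)$, and in those three cases the new path is obtained from the old one by replacing its first (cycle) edge with the new first edge and then continuing along an unchanged path; for instance $P'_1(x)=(x,z)\cdot P_1(z)$ while $P_1(x)=(x,y)\cdot P_1(y)$.

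Part~(1) then follows immediately: $R_1(x)$ is bounded by $P_2(x)$, $P_3(x)$ and $a_2a_3$, and since $x$'s outgoing colour-$2$ and colour-$3$ edges are untouched, both bounding paths agree in $S$ and $S'$; hence $R_1(x)=R'_1(x)$. The statements for $R_3(y)$ and $R_2(z)$ come from the same argument with the roles of the three colours and vertices cyclically permuted. For part~(2), focus on $x$ (the cases of $y$ and $z$ are symmetric). Only $P_1(x)$ changes, and the old and new versions of $P_1(x)$ together with the cycle edges $(x,y)$ and $(z,x)$ bound a closed curve. Using property~(P) (the paths $P_1(y)$ and $P_1(z)$ are internally disjoint and meet only at $a_1$), this closed region is exactly $\Delta_1(yz)\cup\{f\}$. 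Hence this set of faces leaves $R_2(x)$ and joins $R_3(x)$, giving the claimed identities.

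For part~(3), I would exploit the fact that $D_1(x)$ depends only on the subtree of $T_1$ rooted at $x$, i.e.\ on the incoming colour-$1$ edges at $x$ and, recursively, on the subtrees below the children. Among the three cycle edges, none is incoming colour-$1$ at $x$ in either wood: in $S$ the cycle edges at $x$ are the outgoing $(x,y)$ of colour $1$ and the incoming $(z,x)$ of colour $2$, and in $S'$ they are the outgoing $(x,z)$ of colour $1$ and the incoming $(y,x)$ of colour $3$. Since no other edge changes, $x$ has the same children in $T_1$ and $T'_1$, and by induction the subtrees rooted at those children are identical; thus $D_1(x)=D'_1(x)$, and the analogous arguments give $D_2(z)=D'_2(z)$ and $D_3(y)=D'_3(y)$. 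The main technical point of the whole proof is the careful edge-change catalogue in the first step; once that is in place, each of the three assertions reduces to a short structural observation.
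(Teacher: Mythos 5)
Your proof is correct and follows essentially the same route as the paper: identify that the flip changes only the outgoing colour-$1$ path of $x$ (and symmetrically for $y,z$), read off the region changes from the old and new versions of $P_1(x)$, and conclude the descendant sets are preserved because no colour-$1$ edge below $x$ is touched. Two small caveats. First, you justify that the region swept between $P_1(x)$ and $P_1'(x)$ is exactly $\Delta_1(yz)\cup\{f\}$ by asserting that property~(P) makes $P_1(y)$ and $P_1(z)$ internally disjoint and meeting only at $a_1$; that is not what (P) says, and it is false in general --- these are two same-colour paths in the tree $T_1$, so they merge at the least common ancestor of $y$ and $z$, which need not be $a_1$. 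The conclusion still stands because $\Delta_1(yz)$ is \emph{defined} as the set of faces in the region bounded by $yz$, $P_1(y)$ and $P_1(z)$, so no disjointness claim is needed. Second, your blanket statement that $P_i(v)=P_i'(v)$ for all $v$ except the three pairs $(x,1),(y,3),(z,2)$ overreaches: the colour-$1$ path of every vertex in $D_1(x)$ passes through $x$ and therefore also changes (and similarly for $D_3(y)$, $D_2(z)$). This does not affect the lemma, which concerns only $x$, $y$ and $z$, but the claim as stated is not true.
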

\longVerNote{
\begin{proof}
 \pnote{We show the first result in each condition since
    the others can be derived similarly.}

  \begin{enumerate}
  \item\label{lem:item:one}  Note
    that the only path outgoing from $x$ that changes is $P_{1}(x)$ (see
    Figure~\ref{fig:face_flip_coords}), that is, $P_{2}(x)=P_{2}'(x)$
    and $P_{3}(x)=P_{3}'(x)$ and therefore $R_{1}(x)=R_{1}'(x)$.

  \bnote{\item Observe that faces in $f\cup\Delta_{1}(yz)$ are to the left of
    $P_{1}(x)$ and therefore $\Delta_{1}(yz)\cup\{f\}\subseteq
    R_{2}(x)$; \rrnote{analogously} $\Delta_{1}(yz)\cup\{f\}\subseteq
    R'_{3}(x)$, see Figure~\ref{fig:face_flip_coords}. It can be seen
    that these are all the faces that change regions, around $x$, from
    $S$ to $S'$. Therefore $R'_2(x) = R_2(x) \setminus (\Delta_1(yz)
    \cup \{f\})$ and $R'_3(x) = R_3(x) \cup (\Delta_1(yz) \cup \{f\})$.}
    
  \item  This is an immediate consequence of
    \ref{lem:item:one}, since no edges of $S$ in $R_{1}(x)$ changed
    from $S$ to $S'$.
    
\remove{  \wsnote{\item\label{item:disjoint_interiors} We prove that the interior of
    $R_{1}(x)$ and $R_{3}(y)$ are pairwise disjoint.
    Since 
    $y\in R_{3}(x)$ it follows that $R_{3}(y)\subseteq R_{3}(x)$. Now,
    since $R_{1}(x)\cap R_{3}(x)=P_{2}(x)$ we have that $R_{x}(x)\cap
    R_{3}(x)\subseteq P_{2}(x)$. The result now follows.}

  \wsnote{\item We proceed by contradiction,
    suppose there is $u\in D_{1}(x)\setminus\{x\}$ so that $u$ is not
    in the interior of $R_{1}(x)$. By planarity, this implies there is
    $v\in P_{j}(x)\cap (D_{1}(x)\setminus\{x\})$, $j\in\{2,3\}$. This
    implies that $T_{1}^{-}\cup T_{j}^{-}$ contains a cycle, which
    contradicts~\ref{item:path_no_common_vertex}.

    We can now deduce that $D_{1}(x)$ is disjoint from
    $D_{3}(y)$. This follows from~\ref{item:disjoint_interiors} and
    since $x$ is in the interior of $R_{1}(y)$, so $x\not\in
    D_{3}(y)$, and $(x,y)\in P_{1}(x)$, so $y\not\in D_{1}(x)$.}}
  \end{enumerate}
\end{proof}
}

\begin{figure}[!ht]
  \centering  
    \scalebox{.82}{\input{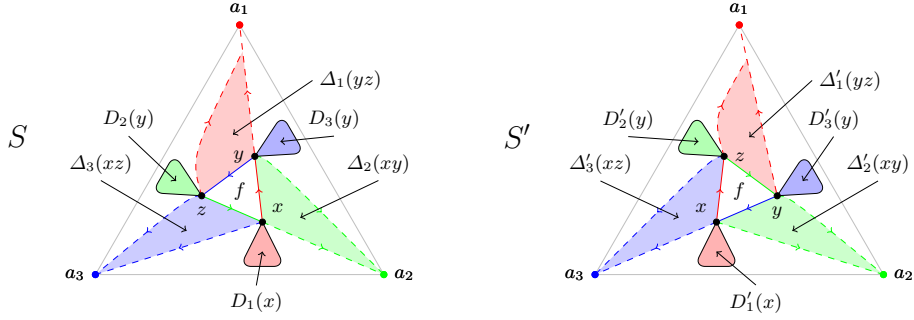}}
  \caption{
    A flip of a counterclockwise oriented face triangle $xyz$ showing changes to the regions. Observe that
    $\Delta_{1}(yz)\cup\{f\}$ leaves $R_2(x)$ and joins $R'_3(x)$. 
  }
  
  \label{fig:face_flip_coords}
\end{figure}

Next we study the difference between the coordinates of the weighted
Schnyder drawings corresponding to $S$ and $S'$.
\lAnote{Since the weights do not change, the coordinates of a vertex $v$
    change only if its regions change.  Furthermore, the regions of
    $v$ change only if the paths leaving $v$ change, and a path
    changes only if it uses an edge of $f$.  Thus the only vertices
    whose coordinates change are those in $D_{1}(x)\cup D_{2}(z)\cup
    D_{3}(y)$.   Furthermore, for a vertex in $D_1(x)$ the amount of change is captured by the faces that switch regions.  We make this more precise in the following lemma.}

\begin{lem}\label{lem:face_flip_coordinates}
 \Achange{ For each $v\in V(T)$,} 
  \[
 (v_{1}',v_{2}',v_{3}')=
  \begin{cases}
    (v_{1},v_{2},v_{3}) \quad\quad\quad  \text{if }v\not\in D_{1}(x)\cup D_{2}(z)\cup  D_{3}(y) \\
    (v_{1},v_{2}-(\delta_{1}(yz)+\mathbf{w}(f)),v_{3}+\delta_{1}(yz)+\mathbf{w}(f))
&  \text{if }v\in  D_{1}(x)\\
    (v_{1}+\delta_{2}(xy)+\mathbf{w}(f),v_{2},v_{3}-(\delta_{2}(xy)+\mathbf{w}(f)))
& \text{if }v\in D_{2}(z)\\
    (v_{1}-(\delta_{3}(xz)+\mathbf{w}(f)),v_{2}+\delta_{3}(xz)+\mathbf{w}(f),v_{3})
& \text{if }v\in D_{3}(y).
  \end{cases}
  \]
\end{lem}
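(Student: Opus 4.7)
The plan is to argue case-by-case on where $v$ sits relative to the three color-coded descendant sets, using the key observation that since the weight distribution $\mathbf{w}$ is fixed, the coordinate $v_{i}=\sum\{\mathbf{w}(g):g\in R_{i}(v)\}$ changes only if $R_{i}(v)$ changes, which in turn happens only if some outgoing path $P_{j}(v)$ uses one of the three edges of $f$ (the only edges whose color/orientation changes in the flip). Explicitly, in $S$ the edges are $(x,y)$ of color $1$, $(y,z)$ of color $3$ and $(z,x)$ of color $2$; an outgoing path $P_{i}(v)$ can only use the color-$i$ edge among these, and only if the corresponding tail ($x$, $y$, or $z$) lies on $P_{i}(v)$. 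By property (R3) the three sets $D_{1}(x), D_{2}(z), D_{3}(y)$ are pairwise disjoint, so the cases of the lemma are well-defined.

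First I handle $v\notin D_{1}(x)\cup D_{2}(z)\cup D_{3}(y)$: none of the three edges of $f$ lies on any $P_{j}(v)$, so all three paths, hence all three regions and all three coordinates, are unchanged. This gives the first line.

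For the case $v\in D_{1}(x)$, I will first verify that $P_{2}(v)$ and $P_{3}(v)$ are unaffected by the flip: by property (P) neither of them meets $x$, and since $y\in P_{1}(x)\subseteq P_{1}(v)$ also $y\notin P_{2}(v)\cup P_{3}(v)$; thus neither path can use $(x,y)$, $(z,x)$, or $(y,z)$. Consequently $R_{1}(v)=R_{1}'(v)$, giving $v_{1}'=v_{1}$. For the remaining two coordinates I invoke Lemma~\ref{lem:sch_woods_before_and_after}(2), which already records that the set of faces $\Delta_{1}(yz)\cup\{f\}$ migrates from $R_{2}(x)$ to $R_{3}'(x)$. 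Since $v\in D_{1}(x)$, the tail of $P_{1}(v)$ from $x$ onward coincides with $P_{1}(x)$, and the same substitution of $(x,y)\cup P_{1}(y)$ by $(x,z)\cup P_{1}(z)$ takes place inside $P_{1}(v)$. The region enclosed between the old and new subpaths (together with $(y,z)$) is precisely $\Delta_{1}(yz)\cup\{f\}$, and it lies on the $R_{2}(v)$-side of $P_{1}(v)$ in $S$ and on the $R_{3}'(v)$-side in $S'$. Summing the weights yields the change $-(\delta_{1}(yz)+\mathbf{w}(f))$ in the second coordinate and $+(\delta_{1}(yz)+\mathbf{w}(f))$ in the third, as claimed.

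The cases $v\in D_{2}(z)$ and $v\in D_{3}(y)$ follow by cyclic symmetry of the three colors, after the corresponding symmetric versions of Lemma~\ref{lem:sch_woods_before_and_after}(2) for $y$ and $z$. The main obstacle is the middle case: justifying that the \emph{same} face set $\Delta_{1}(yz)\cup\{f\}$ transfers between $R_{2}(v)$ and $R_{3}(v)$ for every descendant $v$ of $x$ in $T_{1}$, which is where the shared-tail structure of paths in a Schnyder wood and property (P) do the real work.
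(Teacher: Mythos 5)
Your proposal is correct and follows essentially the same route as the paper: observe that coordinates change only when an outgoing path uses an edge of $f$, so only vertices in $D_{1}(x)\cup D_{2}(z)\cup D_{3}(y)$ move, and then for $v\in D_{1}(x)$ transfer the region change $\Delta_{1}(yz)\cup\{f\}$ recorded in Lemma~\ref{lem:sch_woods_before_and_after} into the stated coordinate shift, with the other cases by symmetry. The only difference is that you explicitly justify the step the paper leaves implicit, namely that the same face set migrates between $R_{2}(v)$ and $R_{3}(v)$ for every descendant $v$ of $x$ because $P_{1}(v)$ shares its tail with $P_{1}(x)$ while $P_{2}(v)$ and $P_{3}(v)$ are untouched; this is a welcome clarification, not a divergence.
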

\longVerNote{
\begin{proof}
  
  \pnote{
    \lAnote{As mentioned above, the only vertices whose coordinates
      change are those in $D_{1}(x)\cup D_{2}(z)\cup D_{3}(y)$.}
    Consider a vertex $v\in D_{1}(x)$.  (The other cases will follow
    by symmetry.)  From Lemma~\ref{lem:sch_woods_before_and_after}, we
    have $R'_{1}(x)=R_{1}(x)$, $R'_2(x) = R_2(x) \setminus
    (\Delta_1(yz) \cup \{f\})$, and $R'_3(x) = R_3 \cup (\Delta_1(yz)
    \cup \{f\})$.  Therefore $
    (v'_{1},v'_{2},v'_{3})=(v_{1},v_{2}-(\delta_{1}(yz)+\mathbf{w}(f)),v_{3}+\delta_{1}(yz)+\mathbf{w}(f)).$
  } \remove{It is clear that the coordinates do not change for any of
    the exterior vertices. Now, let us analyze how the coordinates
    change for each interior vertex. First observe that the 3 outgoing
    paths from every interior vertex $v$ of $T$ use at most one edge
    in total from $(x,y), (y,z)$ and $(z,x)$, as otherwise this would
    imply that the corresponding paths share a vertex different from
    $v$, thus contradicting
    property~\ref{item:path_no_common_vertex}. Let $v\not\in
    D_{1}(x)\cup D_{2}(z)\cup D_{3}(y)$. Note that its regions remain
    unchanged, since none of the outgoing paths from $v$ changed, and
    \pnote{fix} so do its coordinates, as the weights remain
    fixed. Therefore for such vertex $v$ we have
    $(v_{1}',v_{2}',v_{3}')=(v_{1},v_{2},v_{3})$.}
%
\remove{ Now we analyze how the coordinates change from $\Gamma$ to
  $\Gamma'$ for the vertices in $D_{1}(x)\cup D_{2}(z)\cup D_{3}(y)$,
  see Figure~\ref{fig:movable}. By
  Lemma~\ref{lem:sch_woods_before_and_after} we have that the sets
  $D_{1}(x)$, $D_{2}(z)$ and $D_{3}(y)$ are pairwise disjoint and by
  symmetry we may just consider the case $v\in D_{1}(x)$. First
  observe that the face $f$ is contained in $R_{2}(v)$ in $S$, whereas
  in $S'$ it is contained in $R'_{3}(v)$. It is not only $f$ but also
  the faces in $\Delta_{1}(yz)$ that are in $R_{2}(v)$ in $S$, and in
  $R_{3}'(v)$ in $S'$, as shown in
  Figure~\ref{fig:face_flip_coords}. As in
  Lemma~\ref{lem:sch_woods_before_and_after} we have
  $R_{1}(v)=R'_{1}(v)$. These are the only changes in the regions
  corresponding to $v$. Therefore $R'_{1}(v)=R_{1}(v)$,
  $R'_{2}(v)=R_{2}(v)\setminus(\Delta_{1}(yz)\cup\{f\})$ and
  $R'_{3}(x)=R_{3}(x)\cup\Delta_{1}(yz)\cup \{f\}$. This implies that
  the relation between the Schnyder coordinates of $x$ in $\Gamma$ and
  $\Gamma'$ is given by
  \[
  (v'_{1},v'_{2},v'_{3})=(v_{1},v_{2}-(\delta_{1}(yz)+\mathbf{w}(f)),v_{3}+\delta_{1}(yz)+\mathbf{w}(f)).
  \]
  The cases when $v\in D_{2}(x)$ and $v\in D_{3}(x)$ follow from an
  analogous argument.} 
\end{proof} 
}

%

We are ready
to prove the main result of this section.
\nnote{We express it in terms of a general weight distribution since
  we will need that in the next section.}

\begin{thm}\label{thm:morph_face_flip}
  Let $S$ be a Schnyder wood of a planar triangulation $T$ that
  contains a face $f$ bounded by a counterclockwise directed triangle
  \rrnote{$xyz$}, and let $S'$ be the Schnyder wood obtained from $S$
  by flipping $f$. Denote by $\Gamma$ and $\Gamma'$ the weighted
  Schnyder drawings obtained from $S$ and $S'$ respectively
  \rrnote{with weight distribution $\mathbf{w}$}\wsnote{.}
  Then $\langle\Gamma,\Gamma'\rangle$ is a planar morph.
\end{thm}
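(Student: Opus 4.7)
The plan is to show that no interior face of $T$ collapses during the linear morph $\langle \Gamma, \Gamma' \rangle$. By Lemma~\ref{lem:face_flip_coordinates}, only vertices in $D_1(x)\cup D_2(z)\cup D_3(y)$ move, with each set $D_i(\cdot)$ translating rigidly by a vector $\vec u_i$ parallel to the exterior edge opposite $a_i$; property~(R3) ensures these three sets are pairwise disjoint, partitioning $V(T)$ into four rigid-translation blocks.

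For each interior face $F=pqr$, the signed area $A_F(t)$ is a polynomial of degree at most two in $t$, strictly positive at $t=0$ and $t=1$ (since $\Gamma,\Gamma'$ are planar by Dhandapani's extension of Schnyder's theorem). I would then case split on how $p,q,r$ distribute among the four blocks. If all three vertices lie in one block, $F$ is rigidly translated and $A_F$ is constant. If exactly two lie in one block and the third elsewhere, the edge joining the two same-block vertices has constant time-derivative, which forces the $t^2$ coefficient of $A_F(t)$ to vanish; the resulting linear function is positive at both endpoints and hence throughout $[0,1]$. The substantive case is when $p,q,r$ lie in three distinct blocks.

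In the three-distinct-blocks case, the sign of the $t^2$ coefficient of $A_F$ is determined by the cross products among the displacement vectors $\{0,\vec u_1,\vec u_2,\vec u_3\}$ taken in the CCW order around $F$. For every such $F\neq f$ I would use property~(D1) of the Schnyder wood---which dictates the cyclic order of outgoing and incoming edge colors at each vertex---to show that the CCW orientation of $F$ forces the moving blocks to appear in the ``natural'' cyclic order (the one consistent with the orientation of the exterior triangle), yielding a non-positive $t^2$ coefficient. Then $A_F$ opens downward (or is linear), so positivity at the endpoints suffices.

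The main obstacle is $F=f$ itself: since the flip reverses the cyclic order of $f$'s vertices, $f$ is the unique face whose blocks around its boundary appear in the cyclic order \emph{reversed} by the flip, giving a strictly positive $t^2$ coefficient (an upward-opening parabola). I would complete the proof by writing
\[
 A_f(t) = (1-t)^2 M_1 + (1-t)t\,M_2 + t^2 M_3
\]
with $M_1 = 2A_f(0) > 0$ and $M_3 = 2A_f(1) > 0$, and verifying the discriminant inequality $M_2^2 < 4 M_1 M_3$ by directly expanding $M_2$ using the coordinates of $x,y,z$ in $\Gamma$ and $\Gamma'$ supplied by Lemma~\ref{lem:face_flip_coordinates} and the positivity of $\delta_1(yz),\delta_2(xy),\delta_3(xz)$ and $\mathbf{w}(f)$. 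This algebraic verification---showing that the interior minimum of the upward parabola remains strictly positive---is the technical heart of the proof.
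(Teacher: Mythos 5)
Your block decomposition is sound and its easy cases are correct: by Lemma~\ref{lem:face_flip_coordinates} the vertices split into $D_1(x)$, $D_2(z)$, $D_3(y)$ and the stationary vertices, each moving block translates rigidly parallel to one exterior edge, and the constant/linear-area cases dispose of every face except $f$ and the faces sharing an edge with $f$. This parallels the paper, which handles all faces inside $R_1(x)$, $R_2(z)$, $R_3(y)$ by noting that their vertices all move in a common direction (some with zero speed), so the area function is affine in $t$ and endpoint planarity suffices. You also correctly single out $f$ as the one face whose area is an upward-opening parabola with an interior minimum (at $t=\tfrac12$, since the endpoint areas are equal).

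The gap is in your plan for that last face. Put $W=2n-5$ and $c_i=\delta_i+\mathbf{w}(f)$. Expanding the barycentric determinant of $x^t,y^t,z^t$ (each velocity row sums to zero) gives a $t^2$ coefficient of $-W(c_1c_2+c_2c_3+c_3c_1)$, and since face areas in a weighted Schnyder drawing are proportional to face weights, the two endpoint values coincide; your inequality $M_2^2<4M_1M_3$ is then exactly $c_1c_2+c_2c_3+c_3c_1<4W\mathbf{w}(f)$. This does \emph{not} follow from the positivity of $\delta_1(yz),\delta_2(xy),\delta_3(xz),\mathbf{w}(f)$ together with $\delta_1+\delta_2+\delta_3+\mathbf{w}(f)\le W$, which is all you propose to use: for instance $\mathbf{w}(f)=1$, $\delta_1=\delta_2=49$, $\delta_3=1$, $W=100$ gives $2700\not<400$. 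The inequality is true only because of further structural relations tying the $\delta_i$ to the coordinates of $x,y,z$, which your listed ingredients do not supply, so the ``technical heart'' of your proof cannot be completed as described. The paper sidesteps the quadratic entirely: since $x\in R_1(y)\cap R_1(z)$ in $S$ and likewise in $S'$, we have $x_1<y_1,z_1$ and $x_1'<y_1',z_1'$, while $x_1^t=x_1$ is constant; hence $x_1^t<y_1^t,z_1^t$ for all $t$ and $x^t$ never lies on the segment $y^tz^t$, and cyclically for $y$ (third coordinate) and $z$ (second coordinate). You should replace the discriminant verification by this three-line coordinate argument, or equivalently derive $c_1c_2+c_2c_3+c_3c_1<4W\mathbf{w}(f)$ from it.
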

\begin{proof}
  If a triangle collapses during the morph, then it must be incident
  to at least one vertex that moves, i.e.,~one of $D_1(x), D_2(y)$ or
  $D_3(z)$.  By property~\ref{item:region_property3}, apart from
  $x,y,z$ these vertex sets lie in the interiors of regions $R_1(x),
  R_2(y), R_3(z)$ respectively. Thus it suffices to show that no
  triangle in one of these regions collapses, \Bchange{and that no
    triangle incident to $x,y$ or $z$ collapses.

    \longVerNote{ 
      Let $pqr$ be a triangle such that $pqr\in R_{1}(x)$.  (The argument
      for triangles in other regions is similar.) We prove
        that $pqr$ does not collapse by using Corollary~5
        from~\cite{BarreraCruz13}. We restate this corollary as a
        claim here.

    \begin{claim}
      Consider a morph $M$ acting on points $p$, $q$ and $r$ such
      that their motion is along the same direction and at constant
      speed. If $p$ is to the right of the line through $qr$ at the
      beginning and the end of the morph $M$, then $p$ is to the right of
      the line through $qr$ throughout the morph $M$.
    \end{claim}
    
    Observe that all of $p, q$ and $r$ move in the same direction,
    namely a direction parallel to the exterior edge
    $a_{2}a_{3}$. This holds since the first coordinate of all three
    vertices remains unchanged during the morph by
    Lemma~\ref{lem:face_flip_coordinates}. Suppose, without loss of
    generality, that $p$ lies to the right of the line through $qr$ in
    $\Gamma$. This must also be the case in $\Gamma'$. 
    Therefore by the Claim above, it follows that $p$ lies to the
    right of $qr$ throughout $\langle\Gamma,\Gamma'\rangle$. In
    particular this implies that $pqr$ does not collapse during the
    morph.  \Bchange{The same argument applies to a face in
      $\Delta_{3}(xz)\cup \Delta_{2}(xy)$ that is incident to $x$ but
      not incident to either $y$ or $z$.}  }

}

  \Bchange{It remains to prove that no triangle $t$ incident to at least two
  vertices of $x,y$ and $z$ collapses. Here we only consider the case
  where $t=xyz$, the other case can be handled similarly.}  We will
  show that $x$ never lies on the line segment $yz$ during the morph.
  (The other two cases are similar.)  Since $(x,y)$ has colour $1$ in
  $S$, it follows that $x\in R_{1}(y)$. Similarly, since $(z,x)$ has
  colour $2$ in $S$, we have that $x\in R_{1}(z)$. Therefore
  $x_{1}<y_{1},z_{1}$. Using a similar argument on $S'$ we obtain that
  $x_{1}'<y_{1}',z_{1}'$.  Finally, note that $x_1 = x'_1$.  This
  implies that $x$ never lies on the line segment $yz$ during the
  morph.
\end{proof}

\MoveToAppendix{\begin{proof}
  \rrnote{If a triangle collapses during the morph it must include at
    least one vertex that moves, i.e.~one of $D_1(x), D_2(y)$ or
    $D_3(z)$.  By Lemma~\ref{lem:sch_woods_before_and_after}, apart
    from $x,y,z$ these vertex sets lie in the interiors of regions
    $R_1(x), R_2(y), R_3(z)$ respectively.  Thus it suffices to show
    that no triangle in one of these regions collapses, and that
    \Bchange{no triangle incident to $x,y$ or $z$ collapses.} 

    Consider a triangle in region $R_1(x)$.  

    \rfnote{It remains to prove that} \Bchange{no triangle incident to
      $x,y$ or $z$ collapses. Consider a triangle $t$ different from
      $xyz$ incident to $x$ and not in $R_{1}(x)$. We may assume
      without loss of generality that $t\in\Delta_{3}(xz)$, the case
      $t\in\Delta_{2}(xy)$ can be handled similarly. Suppose
      $t=xw_{1}w_{2}$, with $w_{1},w_{2}\in \Delta_{3}(xz).$ Note that
      any vertex in $w\in\Delta(xz)_{3}\setminus\{x,z\}$ has its third
      coordinate greater than that of $x$ and $z$ throughout the
      morph, since $x,z\in R_{3}(w)$ and $x,z\in R_{3}'(w)$. This
      prevents triangle $xw_{1}w_{2}$ from collapsing during the
      morph. Finally, let us show that}
  triangle $xyz$ does not collapse.
  We will show that $x$ never lies on the line segment $yz$ during the
  morph.  (The other two cases are similar.)  Since $(x,y)$ has colour
  $1$ in $S$, it follows that $x\in R_{1}(y)$. Similarly, since
  $(z,x)$ has colour $2$ in $S$, we have that $x\in
  R_{1}(z)$. Therefore $x_{1}<y_{1},z_{1}$. Using a similar argument
  on $S'$ we obtain that $x_{1}'<y_{1}',z_{1}'$.  Finally, note that
  $x_1 = x'_1$.  This implies that $x$ never lies on the line segment
  $yz$ during the morph.  }
 \end{proof}}
 \remove{ Let
   $\{\Gamma^{t}\}_{t\in[0,1]}=\langle\Gamma,\Gamma'\rangle$. Clearly
   $\Gamma=\Gamma^{0}$ and $\Gamma'=\Gamma^{1}$ are planar
   drawings. We are only left to show that $\Gamma^{t}$ is planar for
   all $t\in(0,1)$. Let us assume, without loss of generality, that
   $f=xyz$ with $(x,y)$ having colour 1 in $S$. Note that the set of
   descendants of $x$, $D_{1}(x)$, remains unchanged from $S$ to $S'$,
   and also recall that $R_{1}(x)=R'_{1}(x)$, by
   Lemma~\ref{lem:sch_woods_before_and_after}. First let us observe
   that among the vertices that change position, see
   Lemma~\ref{lem:face_flip_coordinates}, those in $D_{1}(x)$ remain
   in the region of the plane bounded by $x,x'$,$P_{3}(x)$ and
   $P_{2}(x)$, see Figure~\ref{fig:region2}. This follows from the
   fact that for $t=0$ and $t=1$ each of these vertices belongs to
   $R_{1}(x)$ and that the trajectory followed by each vertex is
   monotone. This implies that if any face collapses, then the face
   must belong to one of $R_{1}(x)$, $R_{2}(z)$ or $R_{3}(y)$, or it
   must be $xyz$.
  \begin{figure}[!ht] 
    \centering
    \input{figs/flip_movement.tix}
    \caption{\rrnote{This figure has become irrelevant to the proof.  Let's delete
it.} Region in which $x$ and its descendants move.}
    \label{fig:region2}
  \end{figure}
  \change{Observe that no face in $R_{i}(x)$ collapses during the morph,
    as otherwise this would need to be undone by $t=1$ and thus
    monotonicity of the trajectories would be contradicted. }
    \pnote{The level of detail is skewed -- too much detail from most
      of the proof and too little detail here.} 
    
    We are only left to discard one last possible source of faces
    collapsing, namely $xyz$. Let us show that there is no point in
    time $t\in[0,1]$ such that a vertex of $xyz$, say $x$, lies in the
    line segment determined by the other two points, in this case
    $yz$. Since $(x,y)$ has colour $1$ in $S$, it follows that $x\in
    R_{1}(y)$. Similarly, since $(z,x)$ has colour $2$ in $S$, we have
    that $x\in R_{1}(z)$. Therefore $x_{1}<y_{1},z_{1}$. \bnote{Using
      a similar argument on $S'$ we obtain that
      $x_{1}'<y_{1}',z_{1}'$.
  Thus we have $x_{1}<y_{1},z_{1}$ and
  $x_{1}'<y_{1}',z_{1}'$ and since $x_{1}=x_{1}'$ then
  $x^{t}$ cannot be in the line segment joining $z^{t}$ and
  $y^{t}$.} The result now follows.
}%
\section{\rfnote{Morphing to flip a separating triangle}}
\label{sec:separating_triangle_morph}
In this section we prove that there is a planar morph between any two
weighted Schnyder drawings that differ by a separating triangle flip \lAnote{(Theorem~\ref{thm:morph_separating_triangle} below)}.
Our morph will be composed of three
linear morphs. 

\begin{figure}[h]  
  \centering
  \scalebox{0.5}{\input{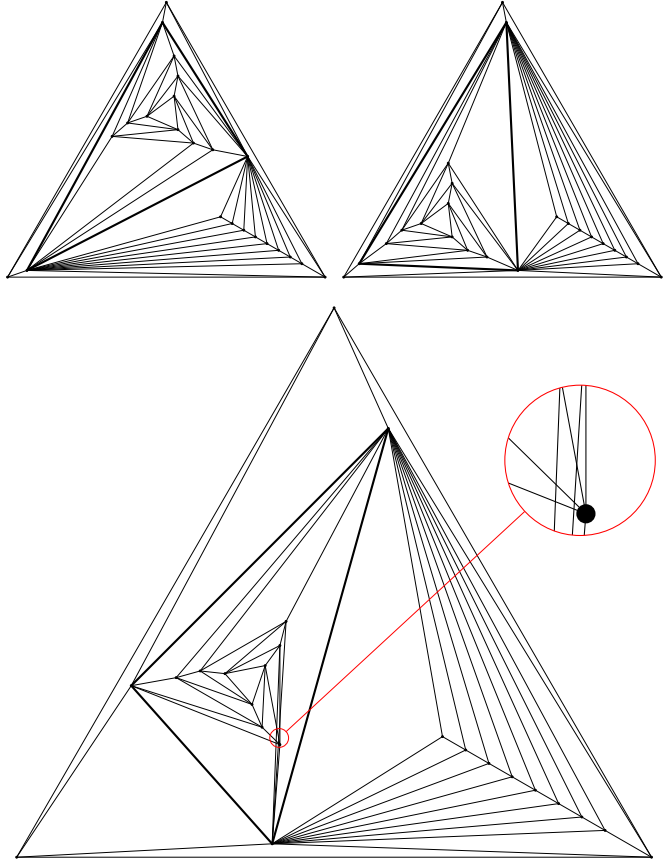}}
  \caption{\wsnote{The linear morph defined by a flip of a separating
      triangle might not be planar if weights are not distributed
      appropriately. Here we illustrate the flip of the separating
      triangle in thick edges. Snapshots at $t=0$, and $t=1$ are the
      top. The bottom drawing corresponds to $t=0.7$; note the edge
      crossings.}}
  \label{fig:bad_separating_triangle_flip} 
\end{figure} 
%
Throughout this section we
let $S$ and $S'$ be Schnyder woods of a planar triangulation $T$ such
that $S'$ is obtained from $S$ after flipping a counterclockwise
oriented separating triangle $C=xyz$, with $(x,y)$ coloured $1$ in
$S$.
Let $\Gamma$ and $\Gamma'$ be two weighted Schnyder drawings obtained
from $S$ and $S'$ respectively with weight distribution $\mathbf{w}$.
For the main result of the section, it suffices to consider a uniform
weight distribution because we can get to it via a single planar
linear morph, as shown in Section~\ref{sec:morph_weight_distribution}.
However, for the intermediate results of the section we need more
general weight distributions.

We now give an outline of the
strategy we follow. 
Morphing linearly from $\Gamma$ to $\Gamma'$ may cause faces inside
$C$ to collapse.  \longVerNote{An example is provided in Figure~\ref{fig:bad_separating_triangle_flip}.}
However, we can show that there is a ``nice'' weight distribution that
prevents this from happening.  Our plan, therefore, is to morph
linearly from $\Gamma$ to a drawing $\overline{\Gamma}$ with a nice
weight distribution, then morph linearly to drawing
$\overline{\Gamma}'$ to effect the separating triangle flip.  A final
change of weights back to the uniform distribution gives a linear
morph from $\overline{\Gamma}'$ to $\Gamma'$.
\remove{To morph between $\Gamma$ and $\Gamma'$ we will
require to first morph to intermediate drawings, that is, we morph
linearly from $\Gamma$ to $\overline{\Gamma}$, then we morph linearly
from $\overline{\Gamma}$ to $\overline{\Gamma}'$ and finally we morph
linearly from $\overline{\Gamma}'$ to $\Gamma'$. Here
$\overline{\Gamma}$ and $\overline{\Gamma}'$ are weighted Schnyder
drawings obtained from $S$ and $S'$ respectively together with some
special weight distribution $\mathbf{w}$.}

This section is structured as follows. First we study how the
coordinates change between 
$\Gamma$ and $\Gamma'$. 
Next we show that faces strictly interior to $T|_{C}$ do not collapse
during a linear morph between $\overline{\Gamma}$ and
$\overline{\Gamma}'$.  We then give a similar result for faces of
$T|_{C}$ that share a vertex or edge with $C$ \emph{provided that} the
weight distribution satisfies certain properties.  Finally we prove
the main result by showing that there is a weight distribution with
the required properties.

Let us begin by examining the coordinates of vertices.  For
  vertex $b \in V(T)$ let $(b_{1},b_{2},b_{3})$ and
  $(b'_{1},b'_{2},b'_{3})$ denote its coordinates in $\Gamma$ and
  $\Gamma'$ respectively. 
For $b$ an interior vertex of $T|_{C}$
let $\beta_{i}$ be the $i$-th coordinate of $b$ in $T|_{C}$
when considering the restriction of $S$ to $T|_{C}$ with weight
distribution $\mathbf{w}$.  By analyzing Figure~\ref{fig:b_coords}, we
can see that the coordinates for $b$ in $\Gamma$ are
\begin{equation}\label{eq:coordinates_interior_interior_vertex}
\begin{split}
(b_{1},b_{2},b_{3})&=(x_{1}+\delta_{3}(xz)+\beta_{1},z_{2}+\delta_{1}(yz)+\beta_{2},y_{3}+\delta_{2}(xy)+\beta_{3})\\
  &=(x_{1},z_{2},y_{3})+(\delta_{3}(xz),\delta_{1}(yz),\delta_{2}(xy))+(\beta_{1},\beta_{2},\beta_{3}).
\end{split}
\end{equation}

We now analyze how the coordinates of vertices change from $\Gamma$ to
$\Gamma'$.  We use 
$\mathbf{w}_{C}$ to denote the weight of faces inside $C$,
i.e.,~$\mathbf{w}_{C}=\sum_{f\in \Fc(T|_{C})}\mathbf{w}(f)$.
\lAnote{Note that reading the proof of the lemma first will make it easier to understand the formulas stated in the lemma.}


\begin{lem}
\label{lem:sep_triangle_flip_coordinates}
\Achange{For each $b \in V(T)$,}
  \[
  (b_{1}',b_{2}',b_{3}')=
  \begin{cases}
    (b_{1},b_{2}-(\delta_{1}(yz)+\mathbf{w}_{C}),b_{3}+\delta_{1}(yz)+\mathbf{w}_{C}) & \text{if }b\in D_{1}(x)\\
    (b_{1}+\delta_{2}(xy)+\mathbf{w}_{C},b_{2},b_{3}-(\delta_{2}(xy)+\mathbf{w}_{C})) & \text{if }b\in D_{2}(z)\\
    (b_{1}-(\delta_{3}(xz)+\mathbf{w}_{C}),b_{2}+\delta_{3}(xz)+\mathbf{w}_{C},b_{3}) &
    \text{if }b\in D_{3}(y)\\
    (x_{1},z_{2},y_{3})+(\delta_{2}(xy),\delta_{3}(xz),\delta_{1}(yz))+(\beta_{3},\beta_{1},\beta_{2}) &
    \text{if }b \in {\cal I}\\
    (b_{1},b_{2},b_{3}) &\text{otherwise}
  \end{cases} 
  \] 
  \Achange{where ${\cal I}$ is the set of interior vertices of $T|_{C}$.}
\end{lem}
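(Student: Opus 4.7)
The plan is to partition the vertices of $T$ into four classes and analyze each separately, imitating the face-flip argument of Section~\ref{sec:face_morph} where possible and handling the interior of $T|_C$ via cyclic symmetry. The classes are: (i) vertices of $T\setminus T|_C$ lying in none of $D_1(x),D_2(z),D_3(y)$; (ii) vertices of $T\setminus T|_C$ lying in one of these descendant sets; (iii) interior vertices of $T|_C$; and (iv) the boundary vertices $x,y,z$ themselves (which are covered by class (ii) since $x\in D_1(x)$, $y\in D_3(y)$, $z\in D_2(z)$).

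First I would establish the following containment: no outgoing path $P_i(v)$ from a vertex $v\in V(T\setminus T|_C)$ ever enters the interior of $T|_C$. This follows from Lemma~\ref{coro:mini_schnyder_wood}: at each boundary vertex $b\in\{x,y,z\}$, all interior edges of $T|_C$ at $b$ are incoming in $S$, so an outgoing path cannot step from the boundary inward. Hence the only way the flip can affect a region of such a vertex $v$ is through color changes on the cycle $C$ itself and, indirectly, through the total weight of faces enclosed by $C$. For $v$ in class (i), none of $P_1(v),P_2(v),P_3(v)$ uses any edge of $C$, so all three regions are unchanged and $(v_1',v_2',v_3')=(v_1,v_2,v_3)$. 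For $v\in D_1(x)\subseteq V(T\setminus T|_C)$ (the argument for $D_2(z)$, $D_3(y)$ is symmetric), the same analysis as in Lemma~\ref{lem:sch_woods_before_and_after} applies, except that the faces leaving $R_2(v)$ and joining $R_3'(v)$ now comprise $\Delta_1(yz)$ together with the \emph{entire} interior of $T|_C$ (all of which sits on the $a_3$-side of the new path $P_1'(x)$). This yields exactly the stated shift by $\delta_1(yz)+\mathbf{w}_C$, replacing the weight $\mathbf{w}(f)$ of the single face in the facial-flip case by the total weight $\mathbf{w}_C$ of the separating disk.

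The main work is case (iii): the interior vertices of $T|_C$. Let $b$ be such a vertex. By Lemma~\ref{coro:mini_schnyder_wood} the restriction $S|_{T|_C}$ is a Schnyder wood of $T|_C$, and Lemma~\ref{lem:cyclic_triangle} together with the chosen colors on $C$ identifies the color-$i$ mini-root as the appropriate element of $\{x,y,z\}$; this is exactly the identification that justifies formula~\eqref{eq:coordinates_interior_interior_vertex}. Under the flip, interior edges of $T|_C$ undergo the color shift $i\mapsto i+1$, and consequently the mini Schnyder wood $S'|_{T|_C}$ is obtained from $S|_{T|_C}$ by a cyclic relabeling of the three colors. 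Therefore the mini-coordinates of $b$ permute cyclically as $(\beta_1,\beta_2,\beta_3)\mapsto(\beta_3,\beta_1,\beta_2)$, and the exit-vertex of $P_i'(b)$ on the boundary $C$ is the old exit-vertex of $P_{i-1}(b)$. Applying the decomposition analogous to~\eqref{eq:coordinates_interior_interior_vertex} in $S'$ shows that the boundary contributions $(\delta_3(xz),\delta_1(yz),\delta_2(xy))$ permute the same way into $(\delta_2(xy),\delta_3(xz),\delta_1(yz))$. Finally, since $x\in D_1(x)$, $z\in D_2(z)$, $y\in D_3(y)$, cases (ii) applied to these three vertices give $x_1'=x_1$, $z_2'=z_2$, $y_3'=y_3$, so the leading triple $(x_1,z_2,y_3)$ is invariant; combining the three contributions gives the formula claimed for the interior case.

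The main obstacle is the bookkeeping in case (iii): pinning down exactly which boundary vertex is the color-$i$ root of the mini Schnyder wood before and after the flip, and verifying that the $i\mapsto i+1$ shift on interior edges simultaneously induces the cyclic permutation of both mini-coordinates $\beta_i$ and boundary $\delta$-contributions. A clean way to organize the proof is to set up the decomposition~\eqref{eq:coordinates_interior_interior_vertex} in a symmetric form so that the flip acts visibly as a cyclic rotation on all three summands, and then read off the result by inspection; the invariance of $(x_1,z_2,y_3)$ then drops out from case (ii) applied to the three boundary vertices.
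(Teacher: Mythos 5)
Your proposal is correct and takes essentially the same route as the paper: the same case split, the descendant sets handled by the facial-flip analysis with the whole interior of $C$ playing the role of the single face $f$ (the paper packages this as an application of Lemma~\ref{lem:face_flip_coordinates} to $T\setminus C$ with the face $xyz$ assigned weight $\mathbf{w}_{C}$), and the interior vertices of $T|_{C}$ handled via the decomposition~\eqref{eq:coordinates_interior_interior_vertex} together with the cyclic shift of the mini-coordinates and of the boundary $\delta$-contributions. The only blemish is cosmetic: the faces $\Delta_{1}(yz)\cup\Fc(T|_{C})$ that join $R_{3}'(x)$ lie on the $a_{2}$-side of the new path $P_{1}'(x)$ (equivalently, the $a_{3}$-side of the old $P_{1}(x)$), not the $a_{3}$-side of the new path, but this does not affect your conclusion.
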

\longVerNote{
\begin{proof}
  Observe that the coordinates of a vertex $b$ change only if
    its regions change, and its regions change only if \Achange{an outgoing path from}
    $b$ 
    uses an interior edge of $T|_{C}$ or an edge of
    $C$. Therefore the only vertices whose coordinates change are the
    interior vertices of $T|_{C}$ or vertices in $D_{1}(x)\cup
    D_{2}(z)\cup D_{3}(y)$.  \rnote{The part of the result for $b \in
      D_{1}(x)\cup D_{2}(z)\cup D_{3}(y)$ follows from
      Lemma~\ref{lem:face_flip_coordinates} applied to $T\setminus C$
      with the restrictions of the Schnyder woods $S$ and $S'$
      to 
      $T\setminus C$ and the weight distribution where the weight of
      the face $xyz$ is equal to $\mathbf{w}_{C}$ and the weight of
      the remaining faces is given by $\mathbf{w}$.  }
 
    \remove{ Let us assume that $b\in D_{1}(x)\cup D_{2}(z)\cup
      D_{3}(y)$. Consider $T\setminus C$, the two restrictions of $S$
      and $S'$ to the set of interior edges of $T\setminus C$ and a
      weight distribution for $T\setminus C$ where the weight of the
      face $xyz$ is equal to $\mathbf{w}_{C}$ and the weight of the
      remaining faces is given by $\mathbf{w}$. The first part of the
      result now follows from Lemma~\ref{lem:face_flip_coordinates}
      applied to $T\setminus C$ considering the Schnyder woods and
      weight distribution described above.  }

  Now suppose $b$ is an interior vertex of $T|_{C}$. The coordinates
  of $b$ in $T|_{C}$ are given
  by~\eqref{eq:coordinates_interior_interior_vertex}.
  
  Similarly 
  (see Figure~\ref{fig:b_coords}) the coordinates for $b$ in $\Gamma'$ are
  given by
  \[
  (b_{1}',b_{2}',b_{3}')
  =(x_{1},z_{2},y_{3})+(\delta_{2}(xy),\delta_{3}(xz),\delta_{1}(yz))+(\beta_{1}',\beta_{2}',\beta_{3}'),
\]
where $\beta'_{i}$ denotes the $i$-th coordinate of $b$ in
$T|_{C}$. Finally, 
since the colours of the interior edges of $T|_{C}$ change from $i$ to
$i+1$ thus $(\beta_{1}',\beta_{2}',\beta_{3}') =
  (\beta_{3},\beta_{1},\beta_{2})$, which gives the required formula.
\end{proof}
}
  \begin{figure}
    \centering 
\scalebox{.82}{    

\begin{tikzpicture} [scale=0.9,
  _vertex/.style ={circle,draw=black, fill=black,inner sep=1pt},
  r_vertex/.style={circle,draw=red,  fill=red,inner sep=1pt},
  g_vertex/.style={circle,draw=green,fill=green,inner sep=1pt},
  b_vertex/.style={circle,draw=blue, fill=blue,inner sep=1pt},
  _edge/.style={black,line width=1pt},
  r_edge/.style={red,line width=0.3pt},
  g_edge/.style={green,line width=0.3pt},
  b_edge/.style={blue,line width=0.3pt},
 every edge/.style={draw=black,line width=0.3pt}]

  \begin{scope}[decoration={ markings, mark=at position 0.5 with {\arrow{>}}}]
    \def\rad{3}
    
    \coordinate (r) at (90:\rad);
    \coordinate (g) at (330:\rad);
    \coordinate (b) at (210:\rad);

    \begin{scope}[yshift=2.5cm]
      \coordinate (incoming_red1) at (-110:1.5);
      \coordinate (incoming_red2) at (-90:1.5);
      \coordinate (incoming_red3) at (-70:1.5);
    \end{scope}

    \begin{scope}[xshift=-1.5cm,yshift=-1cm]
      \coordinate (incoming_blue1) at (15:1);
      \coordinate (incoming_blue2) at (45:1); 
      \coordinate (incoming_blue3) at (75:1);
    \end{scope}
    
    \begin{scope}[xshift=1.5cm,yshift=-1cm]
    \coordinate (incoming_green1) at (120:1);
    \coordinate (incoming_green2) at (140:1);
    \coordinate (incoming_green3) at (160:1);
    \coordinate (incoming_green4) at (180:1);
    \end{scope}
    \draw[gray!50] (r)--(g)--(b)--cycle;

    \node[r_vertex,label=above:$a_{1}$] (ar) at (r) {};
    \node[g_vertex,label=east:$a_{2}$] (ag) at (g) {};
    \node[b_vertex,label=west:$a_{3}$] (ab) at (b) {};

    \coordinate (x) at (barycentric cs:r=4,g=9,b=6);
    \coordinate (y) at (barycentric cs:r=9,g=6,b=4);
    \coordinate (z) at (barycentric cs:r=6,g=4,b=9);
    \coordinate (r_path) at ($(y)!0.8!(r)$);


    \fill[red!20]  (z).. controls +(-0.2,0.7)
    ..(r_path.west) -- (y) -- (z);

    \draw[r_edge,dashed,postaction={decorate}] (z).. controls
    +(-0.2,0.7) ..(r_path.west);

	\coordinate (Delta1p1) at (z);
	\coordinate (Delta1p2) at (y);
	\coordinate (Delta1p3) at (r_path);
	\coordinate (DeltaLabel1) at (barycentric cs:Delta1p1=1,Delta1p2=1,Delta1p3=1);
	\node (Delta1Label) at (2,2) {$\Delta_{1}(yz)$};
	\draw[->] (Delta1Label.west) -- (DeltaLabel1);

    \fill[green!20]  (y).. controls +(0.2,0)
    ..(g) -- (x) -- (y);

    \draw[g_edge,dashed,postaction={decorate}]
    (y).. controls +(0.2,0)
    ..(g);
    \draw[g_edge,dashed,postaction={decorate}] (x) -- (g);

	\coordinate (Delta2p1) at (x);
	\coordinate (Delta2p2) at (y);
	\coordinate (Delta2p3) at (g);
	\coordinate (DeltaLabel2) at (barycentric cs:Delta2p1=1,Delta2p2=1,Delta2p3=1);
	\node (Delta2Label) at (2.5,0.5) {$\Delta_{2}(xy)$};
	\draw[->] (Delta2Label.south) -- (DeltaLabel2);

    \fill[blue!20]  (z).. controls +(-0.2,0)
    ..(b) -- (x) -- (z);

    \draw[b_edge,dashed,postaction={decorate}]
    (z).. controls +(-0.2,0)
    ..(b);
    \draw[b_edge,dashed,postaction={decorate}] (x) -- (b);

	\coordinate (Delta3p1) at (x);
	\coordinate (Delta3p2) at (z);
	\coordinate (Delta3p3) at (b);
	\coordinate (DeltaLabel3) at (barycentric cs:Delta3p1=1,Delta3p2=1,Delta3p3=1);
	\node (Delta3Label) at (-2.5,0.5) {$\Delta_{3}(xz)$};
	\draw[->] (Delta3Label.south) -- (DeltaLabel3);

        \node (SchnLabel) at (-4,1) {{\Large$S$}};

    \node[_vertex,label=below:$x$] (xv) at (x) {};
    \node[_vertex,label=above right:$y$] (yv) at (y) {};
    \node[_vertex,label=above left:$z$] (zv) at (z) {};

    \node[_vertex,label=right:$b$] (bv) at (0,0) {};

    \draw[r_edge,dashed,postaction={decorate}] (bv) -- (yv);
    \draw[b_edge,dashed,postaction={decorate}] (bv) -- (zv);
    \draw[g_edge,dashed,postaction={decorate}] (bv) -- (xv);



    \draw[r_edge,postaction={decorate}] (xv) -- (yv);
    \draw[b_edge,postaction={decorate}] (yv) -- (zv);
    \draw[g_edge,postaction={decorate}] (zv) -- (xv);    

    \draw[r_edge,dashed,postaction={decorate}] (yv) -- (ar);
    \node[r_vertex,label=above:$a_{1}$] (ar) at (r) {};
    \node[g_vertex,label=east:$a_{2}$] (ag) at (g) {};
    \node[b_vertex,label=west:$a_{3}$] (ab) at (b) {};

  \end{scope}

  \begin{scope}[decoration={ markings, mark=at position 0.5 with {\arrow{>}}},xshift=9cm]
    \def\rad{3}
    
    \coordinate (r) at (90:\rad);
    \coordinate (g) at (330:\rad);
    \coordinate (b) at (210:\rad);

    \begin{scope}[yshift=2.5cm]
      \coordinate (incoming_red1) at (-110:1.5);
      \coordinate (incoming_red2) at (-90:1.5);
      \coordinate (incoming_red3) at (-70:1.5);
    \end{scope}

    \begin{scope}[xshift=-1.5cm,yshift=-1cm]
      \coordinate (incoming_blue1) at (15:1);
      \coordinate (incoming_blue2) at (45:1); 
      \coordinate (incoming_blue3) at (75:1);
    \end{scope}
    
    \begin{scope}[xshift=1.5cm,yshift=-1cm]
    \coordinate (incoming_green1) at (120:1);
    \coordinate (incoming_green2) at (140:1);
    \coordinate (incoming_green3) at (160:1);
    \coordinate (incoming_green4) at (180:1);
    \end{scope}
    \draw[gray!50] (r)--(g)--(b)--cycle;

    \node[r_vertex,label=above:$a_{1}$] (ar) at (r) {};
    \node[g_vertex,label=east:$a_{2}$] (ag) at (g) {};
    \node[b_vertex,label=west:$a_{3}$] (ab) at (b) {};

    \coordinate (x) at (barycentric cs:r=4,g=6,b=9);
    \coordinate (y) at (barycentric cs:r=6,g=9,b=4);
    \coordinate (z) at (barycentric cs:r=9,g=4,b=6);
    \coordinate (r_path) at ($(y)!0.8!(r)$);


    \fill[red!20]  (z).. controls +(-0.2,0.7)
    ..(r_path.west) -- (y) -- (z);

    \draw[r_edge,dashed,postaction={decorate}] (z).. controls
    +(-0.2,0.7) ..(r_path.west);

	\coordinate (Delta1p1) at (z);
	\coordinate (Delta1p2) at (y);
	\coordinate (Delta1p3) at (r_path);
	\coordinate (DeltaLabel1) at (barycentric cs:Delta1p1=1,Delta1p2=1,Delta1p3=1);
	\node (Delta1Label) at (2,2) {$\Delta_{1}'(yz)$};
	\draw[->] (Delta1Label.west) -- (DeltaLabel1);

    \fill[green!20]  (y).. controls +(0.2,0)
    ..(g) -- (x) -- (y);

    \draw[g_edge,dashed,postaction={decorate}]
    (y).. controls +(0.2,0)
    ..(g);
    \draw[g_edge,dashed,postaction={decorate}] (x) -- (g);

	\coordinate (Delta2p1) at (x);
	\coordinate (Delta2p2) at (y);
	\coordinate (Delta2p3) at (g);
	\coordinate (DeltaLabel2) at (barycentric cs:Delta2p1=1,Delta2p2=1,Delta2p3=1);
	\node (Delta2Label) at (2.5,0.5) {$\Delta_{2}'(xy)$};
	\draw[->] (Delta2Label.south) -- (DeltaLabel2);

    \fill[blue!20]  (z).. controls +(-0.2,0)
    ..(b) -- (x) -- (z);

    \draw[b_edge,dashed,postaction={decorate}]
    (z).. controls +(-0.2,0)
    ..(b);
    \draw[b_edge,dashed,postaction={decorate}] (x) -- (b);

	\coordinate (Delta3p1) at (x);
	\coordinate (Delta3p2) at (z);
	\coordinate (Delta3p3) at (b);
	\coordinate (DeltaLabel3) at (barycentric
        cs:Delta3p1=1,Delta3p2=1,Delta3p3=1);

        \node (SchnLabel) at (-4,1) {{\Large$S'$}};
	\node (Delta3Label) at (-2.5,0.5) {$\Delta_{3}'(xz)$};
	\draw[->] (Delta3Label.south) -- (DeltaLabel3);

    \node[_vertex,label=below:$x$] (xv) at (x) {};
    \node[_vertex,label=above right:$y$] (yv) at (y) {};
    \node[_vertex,label=above left:$z$] (zv) at (z) {};

    \node[_vertex,label=left:$b$] (bv) at (0,0) {};
    \draw[r_edge,dashed,postaction={decorate}] (bv) -- (zv);
    \draw[b_edge,dashed,postaction={decorate}] (bv) -- (xv);
    \draw[g_edge,dashed,postaction={decorate}] (bv) -- (yv);    



    \draw[r_edge,postaction={decorate}] (xv) -- (zv);
    \draw[b_edge,postaction={decorate}] (yv) -- (xv);
    \draw[g_edge,postaction={decorate}] (zv) -- (yv);    

    \draw[r_edge,dashed,postaction={decorate}] (yv) -- (ar);
    \node[r_vertex,label=above:$a_{1}$] (ar) at (r) {};
    \node[g_vertex,label=east:$a_{2}$] (ag) at (g) {};
    \node[b_vertex,label=west:$a_{3}$] (ab) at (b) {};

  \end{scope}

\end{tikzpicture} }         
    \caption{A flip of a counter-clockwise oriented separating triangle $xyz$.
    }
    \label{fig:b_coords}
  \end{figure}
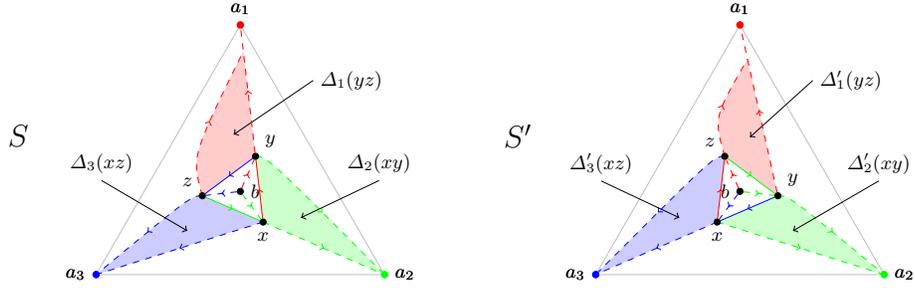

We now examine what happens during a linear morph from $\Gamma$ to
$\Gamma'$.  We first deal with faces strictly interior to $C$.

\begin{lem}\label{lem:sep_triangle_morph_strict_interior}
  For an arbitrary weight distribution 
  no face formed by interior vertices of $T|_{C}$ collapses
  in the morph $\langle\Gamma,\Gamma'\rangle$.
\end{lem}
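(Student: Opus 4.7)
My plan is to show that during the morph the interior vertices of $T|_{C}$ undergo a nonsingular affine transformation of a fixed planar drawing of $T|_{C}$, up to a translation that is common to all of them. Since nonsingular affine maps preserve non-collinearity, no triangle spanned by three such vertices can collapse.

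Combining Lemma~\ref{lem:sep_triangle_flip_coordinates} with equation~\eqref{eq:coordinates_interior_interior_vertex}, for every interior vertex $b$ of $T|_{C}$ the barycentric triple at time $t$ splits as
\[
b(t)\;=\;\tau(t)\;+\;(1-t)\,\bar{\beta}\;+\;t\,\sigma(\bar{\beta}),
\]
where $\bar{\beta}=(\beta_{1},\beta_{2},\beta_{3})$ are the $T|_{C}$-coordinates of $b$ with respect to $S|_{T|_{C}}$ and $\mathbf{w}|_{T|_{C}}$, $\sigma$ is the cyclic shift $(u_{1},u_{2},u_{3})\mapsto(u_{3},u_{1},u_{2})$, and $\tau(t)$ is a triple independent of $b$ (it collects the constant $(x_{1},z_{2},y_{3})$ together with the cyclically shifted $\delta$-terms from the lemma, all of which are determined by vertices on or outside $C$). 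Applying the linear barycentric embedding $\Phi$, the $\tau(t)$ contribution becomes a translation shared by every interior vertex and is therefore irrelevant to collinearity.

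The core observation is that $\Phi\circ\sigma$ is a rotation. With the standard equilateral choice of reference triangle $a_{1}a_{2}a_{3}$, a direct verification on triples whose coordinates sum to $W:=\mathbf{w}_{C}$ shows $\Phi(\sigma(\bar{\beta}))=R(\Phi(\bar{\beta}))$, where $R$ is the rigid rotation by $\pm 120^{\circ}$ about $Wc_{0}$ with $c_{0}=(a_{1}+a_{2}+a_{3})/3$. Writing $R(p)=R'(p-Wc_{0})+Wc_{0}$ with $R'$ the linear rotation and simplifying yields
\[
(1-t)\Phi(\bar{\beta})+t\,R(\Phi(\bar{\beta}))\;=\;\bigl[(1-t)I+tR'\bigr]\!\bigl(\Phi(\bar{\beta})-Wc_{0}\bigr)+Wc_{0}.
\]
Identifying $\mathbb{R}^{2}$ with $\mathbb{C}$ so that $R'$ becomes multiplication by $\omega=e^{\pm 2\pi i/3}$, the bracketed factor corresponds to multiplication by $(1-t)+t\omega$; its squared modulus $1-3t+3t^{2}$ equals $1$ at $t\in\{0,1\}$, attains its minimum $1/4$ at $t=1/2$, and is never zero on $[0,1]$. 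Hence $(1-t)I+tR'$ is a nonsingular similarity throughout $[0,1]$.

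Putting the two observations together, at every $t\in[0,1]$ the positions of the interior vertices of $T|_{C}$ are obtained from their positions in the weighted Schnyder drawing of $T|_{C}$ (the points $\Phi(\bar{\beta})$, which form a planar drawing by Theorem~\ref{thm:schnyder_embedding} applied to $S|_{T|_{C}}$ with weights $\mathbf{w}|_{T|_{C}}$) by composing an invertible affine map with a translation independent of $b$. Affine bijections preserve non-collinearity, so no three interior vertices of $T|_{C}$ can become collinear during the morph, and consequently no face of $T|_{C}$ all of whose vertices are interior can collapse. The only conceptual step is recognizing the flip-induced cyclic permutation of barycentric coordinates as a geometric rotation; once this is in hand, the rest is a direct calculation.
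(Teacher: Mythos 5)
Your proposal is correct, and it shares the paper's first step but finishes differently. Both arguments begin with the same reduction: using \eqref{eq:coordinates_interior_interior_vertex} and Lemma~\ref{lem:sep_triangle_flip_coordinates}, the contribution of everything outside $C$ is a translation $\tau(t)$ common to all interior vertices of $T|_{C}$, so collapse can only come from interpolating between $\bar{\beta}$ and its cyclic shift $\sigma(\bar{\beta})$. From there the paper argues face by face and by contradiction: assuming $b^{r}=s\varepsilon^{r}+(1-s)\gamma^{r}$ it derives the linear system $(1-r)Q_{1}+rQ_{3}=0$, $(1-r)Q_{2}+rQ_{1}=0$, $(1-r)Q_{3}+rQ_{2}=0$, notes that the coefficient determinant is $3r^{2}-3r+1>0$, and uses Property~\ref{item:region_property} to get $Q_{i}>0$ for some $i$, a contradiction. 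You instead observe that $\sigma$ acts on the plane $u_{1}+u_{2}+u_{3}=\mathbf{w}_{C}$ as a $120^{\circ}$ rotation about the centroid, so the time-$t$ map is the similarity $(1-t)I+tR'$ (eigenvalue $(1-t)+t\omega$, squared modulus $3t^{2}-3t+1$) composed with a translation; this is an affine bijection of a drawing that is planar by Theorem~\ref{thm:schnyder_embedding} applied to the restricted Schnyder wood (Lemma~\ref{coro:mini_schnyder_wood}), so nothing can collapse. The same polynomial $3t^{2}-3t+1$ underlies both proofs, but your version is more conceptual and slightly stronger: it handles all interior faces at once without invoking Property~\ref{item:region_property}, and it shows the entire interior of $C$ undergoes a spiral similarity plus translation, which the paper only gestures at informally.
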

\begin{proof}

\lAnote{The idea of the proof is as follows.}
Consider a face inside $C$ formed by internal vertices $b,c,e$ whose
coordinates with respect to $T|_C$ are $\beta, \gamma, \varepsilon$,
respectively.
Examining~\eqref{eq:coordinates_interior_interior_vertex} and
Lemma~\ref{lem:sep_triangle_flip_coordinates} we see that the
coordinates of $b,c,e$ in $\Gamma$ and $\Gamma'$ depend in exactly the
same way on the parameters from $T \setminus C$ and differ only in the
parameters $\beta, \gamma, \varepsilon$.  Therefore triangle $bce$
collapses during the morph if and only if it collapses during the
linear transformation on $\beta, \gamma, \varepsilon$ where we perform
a cyclic shift of coordinates, viz., $(\beta_1, \beta_2, \beta_3)$
becomes $(\beta_3, \beta_1, \beta_2)$, etc.  No triangle collapses
during this transformation because it corresponds to moving each of
the three outer vertices $x,y,z$ in a straight line to its clockwise
neighbour.

\lAnote{We now give algebraic details.}
\longVerNote{
  Let $b,c,e\in V(T|_{C})$ be interior vertices of $T|_{C}$ such that
  $bce$ is an interior face of $T|_{C}$. We proceed by contradiction
  by assuming there is a time $r\in(0,1)$ during the linear morph such
  that $bce$ collapses, say $b^{r}$ is in the line segment joining
  $c^{r}$ and $e^{r}$. That is, assume that
  \begin{equation}
    b^{r}=se^{r}+(1-s)c^{r}\label{eq:b_collinear}
  \end{equation}
  for some $r\in(0,1)$ and
  $s\in[0,1]$. By~\eqref{eq:coordinates_interior_interior_vertex} and
  Lemma~\ref{lem:sep_triangle_flip_coordinates}, the left hand side
  of~\eqref{eq:b_collinear} can be written as
  \[
  (x_{1},z_{2},y_{3})+(1-r)(\delta_{3}(xz),\delta_{1}(yz),\delta_{2}(xy))+r(\delta_{2}(xy),\delta_{3}(xz),\delta_{1}(yz))+\beta^{r},
  \]
  where
  $\beta^{r}=(1-r)(\beta_{1},\beta_{2},\beta_{3})+r(\beta_{3},\beta_{1},\beta_{2}).$
  Similar to what we have above, by
  using~\eqref{eq:coordinates_interior_interior_vertex} and
  Lemma~\ref{lem:sep_triangle_flip_coordinates}, we can rewrite the
  right hand side of~\eqref{eq:b_collinear} as
  \[
  \begin{aligned}
  (x_{1},z_{2},y_{3})&+(1-r)(\delta_{3}(xz),\delta_{1}(yz),\delta_{2}(xy))\\
  &+r(\delta_{2}(xy),\delta_{3}(xz),\delta_{1}(yz))+s\varepsilon^{r}+(1-s)\gamma^{r},\end{aligned}
  \]
  where $\varepsilon^{r}$ and $\gamma^{r}$ are defined analogously to
  $\beta^{r}$. So, equation~\eqref{eq:b_collinear} is equivalent to
  \[
  \beta^{r}-((1-s)\varepsilon^{r}+s\gamma^{r})=(0,0,0).
  \]
  This can be rewritten as
  \[
  (1-r)\beta^{0}+r\beta^{1}-((1-s)((1-r)\varepsilon^{0}+r\varepsilon^{1})+s((1-r)\gamma^{0}+r\gamma^{1}))=(0,0,0),
  \]
  and rearranging terms yields
  \[
  (1-r)(\beta^{0}-((1-s)\varepsilon^{0}+s\gamma^{0}))+r(\beta^{1}-((1-s)\varepsilon^{1}+s\gamma^{1}))=(0,0,0).
  \]
  This is equivalent to the following system of equations
  \begin{align*}
      (1-r)(\beta_{1}-((1-s)\varepsilon_{1}+s\gamma_{1}))+r(\beta_{3}-((1-s)\varepsilon_{3}+s\gamma_{3}))&=0\\
      (1-r)(\beta_{2}-((1-s)\varepsilon_{2}+s\gamma_{2}))+r(\beta_{1}-((1-s)\varepsilon_{1}+s\gamma_{1}))&=0\\
      (1-r)(\beta_{3}-((1-s)\varepsilon_{3}+s\gamma_{3}))+r(\beta_{2}-((1-s)\varepsilon_{2}+s\gamma_{2}))&=0.
  \end{align*}
  To simplify the following arguments, we let
  $Q_{i}=\beta_{i}-((1-s)\varepsilon_{i}+s\gamma_{i})$. So the system
  of equations now becomes
  \begin{align}
      (1-r)Q_{1}+rQ_{3}&=0\label{eq:sys_1}\\
      (1-r)Q_{2}+rQ_{1}&=0\label{eq:sys_2}\\
      (1-r)Q_{3}+rQ_{2}&=0.\label{eq:sys_3}
  \end{align}
  Since these coordinates were obtained from a weighted Schnyder drawing, we know
  there is $i\in\{1,2,3\}$ so that $\beta_{i}>\gamma_{i},\varepsilon_{i}$ by 
  \nnote{Property~\ref{item:region_property}}.
  We may assume without loss of generality that $i=1$. This implies in
  particular that $Q_{1}>0$. Now, we have that
\[\left|
\begin {array}{ccc}
(1-r)&0&r\\
r&(1-r)&0\\
0&r&(1-r)
\end {array}
\right|=3r^{2}-3r+1>0.
\]
Therefore it must be the case that $Q_{i}=0$, $i=1,2,3$. This contradicts
$Q_{1}>0$, so the result now follows.}
\end{proof}


Next we consider the faces interior to $C$ that share an edge or
vertex with $C$.  We show that no such face
collapses, provided that the
weight distribution $\mathbf{w}$ satisfies
\rfnote{$\delta_{1} = \delta_{2} = \delta_{3}$
where we use
$\delta_{1}$, $\delta_{2}$ and $\delta_{3}$
to denote $\delta_{1}(yz)$, $\delta_{2}(xy)$ and $\delta_{3}(xz)$
respectively. 
}

\begin{lem}\label{lem:sep_triangle_morph_ext_edge}
  \bnote{ Let $\mathbf{w}$ be a weight distribution for the interior
    faces of $T$ such that $\delta_{1}=\delta_{2}=\delta_{3}$.  No
    interior face of $T|_{C}$ incident to an exterior vertex of
    $T|_{C}$ collapses during $\langle\Gamma,\Gamma'\rangle$.}
\end{lem}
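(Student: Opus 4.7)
The plan is to prove the lemma by case analysis on how many exterior vertices of $T|_C$ the face is incident to. Since the face is an interior face of $T|_C$ incident to at least one of $x,y,z$, it is incident to either exactly one of them (corner case) or exactly two (edge case, sharing an edge of $C$).

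For the edge case (say face $xyc$ with $c$ interior; the faces $yzc$ and $zxc$ follow by cyclic symmetry), I would compute the signed area $A(t)$ directly as a $2\times 2$ determinant obtained via the usual column-operation trick on the coordinates, which are linear in $t$. Thus $A(t)$ is a polynomial of degree at most 2 in $t$. Using Lemma~\ref{lem:sep_triangle_flip_coordinates} together with the equal-$\delta$ assumption (which forces the three shifts for $x,y,z$ to equal the common value $\Delta=\delta+w_C$), the $t^2$ coefficient simplifies to $-\Delta(\delta+3\gamma_3)$, which is strictly negative. Since $A(0),A(1)>0$ by planarity of $\Gamma,\Gamma'$, a concave-down quadratic with positive endpoints satisfies $A(t)\ge (1-t)A(0)+tA(1)>0$ on $[0,1]$, so the face does not collapse.

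For the corner case (face $xbc$ with $b,c$ interior), the key observation is that the cyclic-shift map $\psi_t(v_1,v_2,v_3)=((1-t)v_1+tv_3,\,(1-t)v_2+tv_1,\,(1-t)v_3+tv_2)$ governing the interior $T|_C$-coordinates during the morph has determinant $(1-t)^3+t^3=1-3t+3t^2>0$ on $[0,1]$, so $\psi_t$ is an invertible linear transformation. Consequently the $T|_C$-only morph, viewed as a standalone Schnyder drawing, stays planar throughout. Writing $b^t=X+\psi_t(\beta)$ for interior $b$ and $x^t=X+\psi_t(x_{T|_C})+o_x(t)$ for the exterior vertex $x$, where $o_x(t)$ is a specific offset whose linear part has direction $(0,-\delta,\delta)$ (which is parallel to the edge $a_2a_3$), and using the identity $\det(u-w,v-w)=\det(u,v)-\det(u-v,w)$, one obtains the decomposition
\begin{equation*}
A(t)/(2n-5)=(1-3t+3t^2)A^{*}_{0}-\det_{2D}(\psi_t(\beta-\gamma),\,o_x(t)),
\end{equation*}
where $A^{*}_{0}>0$ is the signed area of $xbc$ in the $T|_C$-only drawing.

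The main obstacle is completing the corner case: unlike the edge case, the $t^2$ coefficient of $A(t)$ (computed to be $3A^{*}_{0}+\delta(\mu_3-\mu_1)$ with $\mu=\beta-\gamma$) can be positive, so concavity alone does not suffice. The plan is therefore to show that the quadratic $A(t)$, despite having positive leading coefficient, has no real roots in $[0,1]$, by verifying its discriminant is negative. This step is where the cyclic symmetry from $\delta_1=\delta_2=\delta_3$ is crucially used, together with the structural constraints $A(0)>0$ and $A(1)>0$ coming from Schnyder property (R1) applied in $T|_C$.
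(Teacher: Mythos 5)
There is a genuine gap, and in fact the edge case already contains a sign error that breaks the concavity argument. Take the face $bxy$ sharing edge $xy$ with $C$ and compute, as you propose, the determinant of the first two barycentric coordinates of $y^t-x^t$ and $b^t-x^t$ using Lemma~\ref{lem:sep_triangle_flip_coordinates} and~\eqref{eq:coordinates_interior_interior_vertex}: its $t^2$ coefficient is indeed $-\Delta(\delta+3\beta_3)$ with $\Delta=\delta+\mathbf{w}_C$. But this determinant is \emph{negative} at $t=0$ and $t=1$ (the projection onto coordinates $(v_1,v_2)$ reverses orientation relative to the drawing, since $a_1,a_2,a_3$ are placed in clockwise order). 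Once you normalize so that $A(0),A(1)>0$, the leading coefficient becomes $+\Delta(\delta+3\beta_3)>0$: the signed area is a \emph{convex} quadratic, and convexity plus positive endpoints does not rule out a zero in $(0,1)$. This is not a bookkeeping quibble --- the face area genuinely shrinks toward the middle of the morph. For instance, with a single interior vertex $b$, uniform weights $\mathbf{w}\equiv 1$, $\mathbf{w}_C=3$, $\beta=(1,1,1)$ and $\delta=1$, the positively-oriented area of $xyb$ is proportional to $16t^2-16t+7$, dropping from $7$ at the endpoints to $3$ at $t=\tfrac12$. So the endpoint-interpolation bound goes the wrong way, and an actual inequality is needed. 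The corner case is then explicitly left unfinished: you state the plan (show the discriminant of $A(t)$ is negative) but do not carry it out, and that plan asks for more than is true in general --- a convex quadratic positive on $[0,1]$ may well have real roots outside $[0,1]$, so a negative discriminant is not the right target.

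The paper avoids signed areas entirely. It parametrizes a hypothetical collapse as $b^{r}=(1-s)x^{r}+sy^{r}$ (resp.\ $b^{r}=(1-s)x^{r}+sc^{r}$ in the corner case), expands this into three scalar equations in the barycentric coordinates using Lemma~\ref{lem:sep_triangle_flip_coordinates}, and derives a contradiction by a short chain of sign deductions: in the edge case, the first and third equations force $\beta_{1}-s(\delta+\mathbf{w}_{C})<0$ and $\beta_{2}-(1-s)(\delta+\mathbf{w}_{C})<0$, which makes the second equation unsolvable for $r\in[0,1]$; in the corner case the same game is played starting from the Schnyder region fact $\beta_1>\gamma_1$ (property~\ref{item:region_property} applied inside $T|_C$). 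If you want to salvage your area-based approach you would need to bound the minimum of the convex quadratic away from zero, e.g.\ by controlling $A'(0)$ or $A(\tfrac12)$ using the constraints $0\le\beta_i$, $\sum_i\beta_i=\mathbf{w}_C$ and $\delta_1=\delta_2=\delta_3$; as written, neither case is established.
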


\begin{proof}
 \lAnote{The idea of the proof is as follows.}
  The cases where the interior face is incident to the edge $xy$ of
  $C$ and where the interior face is \Bchange{only} incident to the
  vertex $x$ of $C$ have to be examined separately \lAnote{using} 
  similar
  arguments. Consider the case of an interior face $bxy$ incident to
  edge $xy$.  Suppose by contradiction that at time $r \in [0,1]$
  during the morph the face collapses with $b^r$ lying on segment $x^r
  y^r$, say $b^r = (1-s)x^r + s y^r$ for some $s \in [0,1]$.  We use
  formula~\eqref{eq:coordinates_interior_interior_vertex} and
  Lemma~\ref{lem:sep_triangle_flip_coordinates} to re-write this
  equation.  Some further algebraic manipulations (details to follow)
  show that there is no solution for $r$.  The remaining cases follow
  by analogous arguments.

\lAnote{We now turn to the details.}
\longVerNote{
  \bnote{We may assume, without loss of generality, that the interior
    face we are considering is incident to vertex $x$.  We have two
    possible cases.}
    \begin{description}
      \bnote{    \item[Case 1:] The interior face is incident with an exterior edge.
    \item[Case 2:] The interior face is incident with $x$ and two
      interior vertices.}
    \end{description}

    \bnote{In each case we formulate algebraically the fact that the face
    in question collapses and then proceed by contradiction.}

    \begin{description}
      \bnote{\item[Case 1:] Without loss of generality we may assume
        that the exterior edge that is incident to the interior face
        is $xy$, say $bxy$ is the face in question.} Assume, by
      contradiction, that there is a time $r\in(0,1)$ during the morph
      such that $b$, $x$ and $y$ are collinear. That is
      $b^{r}=(1-s)x^{r}+sy^{r}$. Since $(b,x)$ has colours $2$ and $3$
      in $S$ and $S'$ respectively, it follows that
      $x_{1}^{0}<b_{1}^{0}$ and
      $x_{1}^{0}=x_{1}^{1}<b_{1}^{1}$. Similarly
      $y_{3}<b_{3}^{0},b_{3}^{1}$.  Therefore for any $r\in[0,1]$, the
      first coordinate of $b$ is greater than $x_{1}$ and the third
      coordinate of $b$ is greater than $y_{3}$. Therefore $s\in
      (0,1)$, as otherwise at least one of these conditions would not
      hold.

  We write the equation from above explicitly,
  using~\eqref{eq:coordinates_interior_interior_vertex} and Lemma~\ref{lem:sep_triangle_flip_coordinates}:
  \begin{multline*}
     (x_{1},z_{2},y_{3})+  (1-r)(\delta_{3},\delta_{1},\delta_{2})+ r(\delta_{2},\delta_{3},\delta_{1})+ (1-r)(\beta_{1},\beta_{2},\beta_{3})+ r(\beta_{3},\beta_{1},\beta_{2})\\=
    (1-s)\Big[ (x_{1},z_{2},y_{3})+ (1-r)(0,\delta_{3}+\delta_{1},\delta_{2})+ r(0,\delta_{3},\delta_{2}+\delta_{1})\\+ (1-r)(0,\mathbf{w}_{C},0)+ r(0,0,\mathbf{w}_{C})\Big]\\
    +s\Big[ (x_{1},z_{2},y_{3})+ (1-r)(\delta_{2}+\delta_{3},\delta_{1},0)+ r(\delta_{2},\delta_{1}+\delta_{3},0)\\+ (1-r)(\mathbf{w}_{C},0,0)+ r(0,\mathbf{w}_{C},0)\Big],
  \end{multline*}

  here $\mathbf{w}_{C}\coloneqq
  \sum_{f\in\Fc(T|_{C})}\mathbf{w}(f)$. By simplifying and analyzing each
  coordinate separately we obtain the following system of equations
  \begin{align}
    (1-r) (\delta+\beta_{1}-s(2\delta+\mathbf{w}_{C})) + r(\delta+\beta_{3}-s\delta) &= 0\label{eq:coord1}\\
    (1-r) (\beta_{2}-(1-s)(\delta+\mathbf{w}_{C})) + r(\beta_{1}-s(\delta+\mathbf{w}_{C})) &= 0\label{eq:coord2}\\
    (1-r) (\delta+\beta_{3}-(1-s)\delta) + r(\delta+\beta_{2}-(1-s)(2\delta+\mathbf{w}_{C})) &= 0,\label{eq:coord3}
  \end{align}
  where $\delta=\delta_{1}=\delta_{2}=\delta_{3}$. Now, since we have that
  $\delta+\beta_{3}-s\delta>0$ then from \eqref{eq:coord1} we conclude that
  $\delta+\beta_{1}-s(2\delta+\mathbf{w}_{C})<0$. Therefore
  \begin{equation}
    \beta_{1}-s(\delta+\mathbf{w}_{C})<-(1-s)\delta<0\label{eq:b_part2}.
  \end{equation}
  A similar analysis on equation~\eqref{eq:coord3} yields
  \begin{equation}
    \beta_{2}-(1-s)(\delta+\mathbf{w}_{C})<-s\delta<0\label{eq:b_part3}.
  \end{equation}
  It now follows from inequalities~\eqref{eq:b_part2}
  and~\eqref{eq:b_part3} that there is no $r\in[0,1]$
  satisfying~\eqref{eq:coord2}. \bnote{Therefore the interior face incident to
  $xy$ does not collapse.

\item[Case 2:]
    Let us now consider the case where the interior face is
    incident to $x$ and two interior vertices, say $b$ and $c$.}
  We proceed by contradiction. Let us assume that the face $xbc$
  collapses at time $r$, with $c\in R_{1}(b)$ and $b\in R_{3}(c)$ in
  $S$. This implies in particular that $\gamma_{1}<\beta_{1}$ and that
  $\beta_{3}<\gamma_{3}$.

  We will show that $b$ cannot be in the line segment $xc$ at any time
  $r$ during the morph. So we have $b^r=(1-s)x^{r}+sc^{r}$, with $r\in(0,1)$
  and $s\in(0,1)$. By an analogous argument, it will follow that $c$
  cannot be in the line segment $xb$.

  We write the previous equation
  explicitly. From~\eqref{eq:coordinates_interior_interior_vertex} and
  Lemma~\ref{lem:sep_triangle_flip_coordinates} we obtain:
  \begin{multline*}
     (x_{1},z_{2},y_{3})+(1-r)(\delta_{3},\delta_{1},\delta_{2})+r(\delta_{2},\delta_{3},\delta_{1})+(1-r)(\beta_{1},\beta_{2},\beta_{3})+r(\beta_{3},\beta_{1},\beta_{2})\\=
    (1-s)\Big[ (x_{1},z_{2},y_{3})+     (0,         \delta_{3},\delta_{2})+(1-r)(0,\delta_{1}+\mathbf{w}_{C},0)\\+r(0,0,\delta_{1}+\mathbf{w}_{C})\Big]\\
    +s\Big[ (x_{1},z_{2},y_{3})+(1-r)(\delta_{3},\delta_{1},\delta_{2})+r(\delta_{2},\delta_{3},\delta_{1})\\+(1-r)(\gamma_{1},\gamma_{2},\gamma_{3})+r(\gamma_{3},\gamma_{1},\gamma_{2})\Big]
  \end{multline*}

  
  By simplifying and writing equations for each coordinate we obtain
  the following system of equations
  \begin{align}
    (1-r) (\delta+\beta_{1}-s(\delta+\gamma_{1})) + r(\delta+\beta_{3}-s(\delta+\gamma_{3})) &= 0\label{eq:ncoord1}\\
    (1-r) (\beta_{2}-(1-s)(\delta+\mathbf{w}_{C})-s\gamma_{2}) + r(\beta_{1}-s\gamma_{1}) &= 0\label{eq:ncoord2}\\
    (1-r) (\beta_{3}-s\gamma_{3}) + r(\beta_{2}-(1-s)(\delta+\mathbf{w}_{C})-s\gamma_{2}) &= 0,\label{eq:ncoord3}
  \end{align}
  where $\delta=\delta_{1}=\delta_{2}=\delta_{3}$. Now, since
  $\beta_{1}>\gamma_{1}$ from~\eqref{eq:ncoord1} we get that
  $\delta+\beta_{1}-s(\delta+\gamma_{1})>0$ and therefore
  $\delta+\beta_{3}-s(\delta+\gamma_{3})<0$. From the previous
  inequality we get $\beta_{3}-s\gamma_{3}<-(1-s)\delta<0$. Now, using
  this in equation~\eqref{eq:ncoord3} we get that
  $\beta_{2}-(1-s)(\delta+\mathbf{w}_{C})-s\gamma_{2}>0$. Finally, using the
  previous inequality in~\eqref{eq:ncoord2} we obtain that
  $\beta_{1}-s\gamma_{1}<0$, contradicting our original assumption
  that $\beta_{1}>\gamma_{1}$. So the result follows.  
  \end{description}}
\end{proof}
\remove{ We are only left to discard one more possible source of faces
  collapsing within $T|_{C}$, namely faces incident to two interior
  vertices of $T|_{C}$ and an exterior vertex of $T|_{C}$. As in the
  previous lemma, we require that
  $\delta_{1}=\delta_{2}=\delta_{3}$. We prove this for the case when
  the exterior vertex of $T|_{C}$ is $x$, since the other two cases
  follow from a analogous argument.}%
We are now ready to prove the main result of this section.

\begin{thm}\label{thm:morph_separating_triangle}
  Let $T$ be a planar triangulation and let $S$ and $S'$ be two
  Schnyder woods of $T$ such that $S'$ is obtained from $S$ by
  flipping a counterclockwise cyclically oriented separating triangle
  $C=xyz$ in $S$.  Let $\Gamma$ and $\Gamma'$ be weighted Schnyder
  drawings obtained from $S$ and $S'$, respectively, with uniform
  weight distribution.  Then there exist weighted Schnyder drawings
  $\overline{\Gamma}$ and $\overline{\Gamma}'$ on 
  \rnote{a $(6n-15) \times (6n-15)$} 
  integer grid such that each of the following linear morphs
  is planar: $\langle \Gamma, \overline{\Gamma}\rangle , \langle
  \overline{\Gamma},\overline{\Gamma}'\rangle,$ and $\langle
  \overline{\Gamma}',\Gamma' \rangle$.
\end{thm}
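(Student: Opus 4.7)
The plan is to construct a weight distribution $\overline{\mathbf{w}}$, used on both $S$ and $S'$, so that the resulting drawings $\overline{\Gamma}$ and $\overline{\Gamma}'$ satisfy $\overline{\delta}_1=\overline{\delta}_2=\overline{\delta}_3$ (the hypothesis of Lemma~\ref{lem:sep_triangle_morph_ext_edge}) and lie on the integer $(6n-15)\times(6n-15)$ grid. Throughout I rescale so that weights sum to $6n-15$: in particular $\Gamma$ and $\Gamma'$ are given by the uniform weight $3$ on each face, which lies in the target grid.

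For the construction of $\overline{\mathbf{w}}$, let $n_i=|\Delta_i|\geq 1$ for $i=1,2,3$ and set $D=\max(n_1,n_2,n_3)$. Within each $\Delta_i$, assign one designated face weight $D-n_i+1$ and each of the other $n_i-1$ faces weight $1$; this gives $\overline{\delta}_i=D$. The remaining $6n-15-3D=3(n_1+n_2+n_3-D)+3n_C\geq 3n_C$ units of weight can then be distributed as positive integers over the $n_C$ faces of $T|_C$, which is feasible (for instance, give each such face weight $1$ and dump the surplus into one of them). All face weights are positive integers summing to $6n-15$, so the coordinates of $\overline{\Gamma}$ and $\overline{\Gamma}'$ are integers in the $(6n-15)\times(6n-15)$ grid.

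With $\overline{\mathbf{w}}$ in hand, the outer morphs $\langle\Gamma,\overline{\Gamma}\rangle$ and $\langle\overline{\Gamma}',\Gamma'\rangle$ only change the weight distribution while fixing the Schnyder wood, so Lemma~\ref{lem:morph_weight_distribution} yields planarity of each. For the middle morph $\langle\overline{\Gamma},\overline{\Gamma}'\rangle$, I would split the faces of $T$ into three groups. Faces of $T|_C$ not incident to $C$ do not collapse by Lemma~\ref{lem:sep_triangle_morph_strict_interior}. Faces of $T|_C$ incident to $C$ do not collapse by Lemma~\ref{lem:sep_triangle_morph_ext_edge}, which applies precisely because $\overline{\delta}_1=\overline{\delta}_2=\overline{\delta}_3$. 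For faces of $T$ lying outside $C$, Lemma~\ref{lem:sep_triangle_flip_coordinates} shows that the vertices outside $C$ (together with $x,y,z$) move during $\langle\overline{\Gamma},\overline{\Gamma}'\rangle$ exactly as they would during the facial flip of $xyz$ in $T\setminus C$ under the weight distribution that agrees with $\overline{\mathbf{w}}$ off $T|_C$ and assigns the face $xyz$ weight $\overline{\mathbf{w}}_C$; Theorem~\ref{thm:morph_face_flip} then prevents any such face from collapsing.

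The main obstacle is the middle morph: it is what forces the particular choice $\overline{\delta}_1=\overline{\delta}_2=\overline{\delta}_3$ in $\overline{\mathbf{w}}$, and it demands that the inside-of-$C$ analysis (Lemmas~\ref{lem:sep_triangle_morph_strict_interior} and~\ref{lem:sep_triangle_morph_ext_edge}) and the outside-of-$C$ reduction (via Lemma~\ref{lem:sep_triangle_flip_coordinates} to Theorem~\ref{thm:morph_face_flip}) together cover every face of $T$. Once that case analysis is in place, the grid-size bound is automatic from the integrality of $\overline{\mathbf{w}}$, and polynomial-time computability of the drawings is clear from the explicit construction.
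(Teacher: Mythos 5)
Your overall strategy is the same as the paper's: pass to a grid three times finer so the uniform distribution gives each face weight $3$, construct an integer weight distribution $\overline{\mathbf{w}}$ with $\overline\delta_1=\overline\delta_2=\overline\delta_3$, handle the two outer morphs by Lemma~\ref{lem:morph_weight_distribution}, and split the middle morph into strictly interior faces of $T|_C$ (Lemma~\ref{lem:sep_triangle_morph_strict_interior}), faces of $T|_C$ touching $C$ (Lemma~\ref{lem:sep_triangle_morph_ext_edge}), and faces of $T\setminus C$ (via Lemma~\ref{lem:sep_triangle_flip_coordinates} and Theorem~\ref{thm:morph_face_flip}, exactly as the paper does). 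That part is fine, and your explicit reduction of the outside faces to a facial flip of $xyz$ in $T\setminus C$ with face weight $\overline{\mathbf{w}}_C$ is the correct justification.

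However, your construction of $\overline{\mathbf{w}}$ has a genuine error. The identity $6n-15-3D=3(n_1+n_2+n_3-D)+3n_C$ presupposes that every interior face of $T$ lies either inside $C$ or in one of $\Delta_1,\Delta_2,\Delta_3$. That is false: the faces in $R_1(x)$, $R_2(z)$ and $R_3(y)$ (e.g.\ the face incident to the exterior edge $a_2a_3$) belong to none of these sets, and there is always at least one such face per region. As written, your redistribution dumps \emph{all} weight outside the $\Delta_i$'s into the $n_C$ faces of $T|_C$, leaving those remaining faces with weight $0$; then $\overline\Gamma$ is not a weighted Schnyder drawing (positivity is needed for Dhandapani's planarity result, hence for Lemma~\ref{lem:morph_weight_distribution}), and the weights do not in fact sum to $6n-15$. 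The fix is easy -- leave every face outside $C\cup\Delta_1\cup\Delta_2\cup\Delta_3$ at weight $3$, so the surplus available for $T|_C$ is $3(n_1+n_2+n_3-D)+3n_C\geq 3n_C$ and everything goes through -- but as stated the step fails. For comparison, the paper sidesteps this entirely by equalizing to the average $\delta=(\delta_1+\delta_2+\delta_3)/3$ and only shifting weight \emph{among} the faces of the $\Delta_i$'s (removing at most $2$ from any weight-$3$ face), so no face outside those regions is ever touched.
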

\begin{proof}
  Our aim is to define the planar drawings $\overline\Gamma$ and
  $\overline\Gamma'$.  Each one will be realized in a grid that is
  three times finer than the $(2n-5)\times(2n-5)$ grid, i.e.,~in a
  $(6n-15)\times(6n-15)$ grid with weight distributions that sum to
  $6n-15$.  Under this setup, the initial uniform weight distribution
  $\mathbf{u}$ takes a value of 3 in each interior face.

  Drawings $\overline\Gamma$ and $\overline\Gamma'$ will be the
  weighted Schnyder drawings obtained from $S$ and $S'$ respectively
  with a new weight distribution $\mathbf{\overline w}$.  We use
  $\Delta_{1}$, $\Delta_{2}$ and $\Delta_{3}$ to denote the regions
  $\Delta_{1}(yz)$, $\Delta_{2}(xy)$ and $\Delta_{3}(xz)$
  respectively, in $S$.  We use $\delta_i$ and $\overline\delta_i$ to
  denote the weight of $\Delta_i, i=1,2,3$ with respect to the uniform
  weight distribution and the new weight distribution
  $\mathbf{\overline w}$, respectively.

  We will define $\mathbf{\overline w}$ so that $\overline\delta_{1},
  \overline\delta_{2},$ and $\overline\delta_{3}$ all take on the
  average value $\delta\coloneqq
  (\delta_{1}+\delta_{2}+\delta_{3})/3$.  
  The idea is to remove weight from faces in a region of above-average weight, and add weight to faces in a region of below-average weight.
The new face weights must be positive integers. 
  Note first that $\delta$ is an integer.  Note secondly that $\delta
  > \delta_i/3$ for any $i$ since the other $\delta_j$'s are positive.
  Thus $\delta_i - \delta < \frac{2}{3} \delta_i$.  This means that we
  can reduce $\delta_i$ to the average $\delta$ without removing more
  than 2 weight units from any face (of initial weight 3) in any
  region.  There is more than one solution for $\mathbf{\overline w}$,
  but
  the morph might look best if $\mathbf{\overline w}$ is as uniform as
  possible.  To be more specific, we can define new face weights
  $\mathbf{\overline w}$ via the following algorithm: Initialize
  $\mathbf{\overline w}= \mathbf{w}$.  While some $\overline\delta_i$
  is greater than the average $\delta$, remove 1 from a maximum weight
  face of $\Delta_i$ and add 1 to a minimum weight face in a region
  $\Delta_j$ whose weight is less than the average.

  This completes the description of $\overline\Gamma$ and
  $\overline\Gamma'$.  It remains to show that the three linear morphs
  are planar.  
  \rfnote{
  The morphs $\langle \Gamma, \overline\Gamma \rangle$
  and $\langle \overline\Gamma', \Gamma' \rangle$
  only involve changes to the weight distribution so they are planar by 
  Lemma~\ref{lem:morph_weight_distribution}.} 
%
  Consider the linear morph
  $\langle \overline\Gamma, \overline\Gamma' \rangle$.  The two
  drawings differ by a flip of a separating triangle.  They have the
  same weight distribution $\mathbf{\overline w}$ which satisfies
  $\overline\delta_{1}=\overline\delta_{2}=\overline\delta_{3}$.  By
  Lemmas~\ref{lem:sep_triangle_morph_strict_interior},
  and~\ref{lem:sep_triangle_morph_ext_edge} no interior face of
  $T|_{C}$ collapses during the morph.  By
  Theorem~\ref{thm:morph_face_flip} no face of $T\setminus C$
  collapses during the morph. Thus 
  $\langle\Gamma,\Gamma'\rangle$ defines a planar morph.
\end{proof}

\section{Identifying weighted Schnyder drawings}
\label{sec:finding-weights}

\remove{ 
  In fact, the recent work [1] implies that for any given point set P
  (with triangular convex hull), there is a unique triangulation for
  which P gives a Schnyder drawing (see [*] below). In my opinion this
  makes the contribution of this paper a nice result but a very
  particular case that may be of interest to only a small audience.

  [1] Nicolas Bonichon, Cyril Gavoille, Nicolas Hanusse, and David
  Ilcinkas. Connections between theta-graphs, delaunay triangulations,
  and orthogonal surfaces. In WG'2010, volume 6410 of Lecture Notes in
  Computer Science, pages 266Ð278, 2010.

  [*] Assume that we are given a Schnyder drawing on a set P of
  points. Then that triangulation (endowed with the Schnyder wood) is
  exactly the oriented half-\Theta_6-graph of P as defined in
  [1]. This easily follows from the following (well-known) property of
  Schnyder drawings: If we draw the three lines through a point x \in
  P that are parallel to the three sides of the convex hull of the
  drawing, we obtain 6 sectors and the orientation and color (in the
  Schnyder drawing) of an edge incident to x is entirely determined by
  which cones it lies in.  }

\Achange{ In this section we give a polynomial time algorithm to test
  if a given straight-line planar drawing $\Gamma$ of triangulation
  $T$ is a weighted Schnyder drawing.  The first step is to identify
  the Schnyder wood.  A recent result of Bonichon et
  al.~\cite{Bonichon10} shows that, given a point set $P$ with
  triangular convex hull, a Schnyder drawing on $P$ is exactly the
  ``half-$\Theta_6$-graph'' of $P$, which can be computed efficiently.
  Thus, given drawing $\Gamma$, we first ignore the edges and compute
  the half-$\Theta_6$ graph of the points.  If this differs from
  $\Gamma$, we do not have a weighted Schnyder drawing.  Otherwise,
  the half-$\Theta_6$ graph determines the Schnyder wood $S$.  We next
  find the face weights.}  \remove{ We do not know a polynomial time
  algorithm to test if a given straight-line planar drawing is a
  weighted Schnyder drawing.  However, the problem becomes easy if
  both the drawing and the Schnyder wood are specified.  Let $S$ be a
  Schnyder wood of triangulation $T$ and let $\Gamma$ be a
  straight-line planar drawing of $T$ in the plane $x+y+z=2n-5$ with
  the three outer vertices on the positive axes.  }  We claim that
there exists a unique assignment of (not necessarily positive) weights
$\mathbf{w}$ on the faces of $T$ such that $\Gamma$ is precisely the
drawing obtained from $S$ and $\mathbf{w}$ as described
in~(\ref{eq:weighted_drawing}).
Furthermore, 
$\mathbf{w}$ can be found in polynomial time by solving a system of
linear equations in the $2n-5$ variables $\mathbf{w}(f), f \in
\Fc(T)$.  The equations are those from~(\ref{eq:weighted_drawing}).
The rows of the coefficient matrix are the characteristic vectors of
$R_i(v), i \in \{1,2,3\}, v$ an interior vertex of $T$,
and the system of equations has a solution because the matrix has rank
$2n-5$.  This was proved by Felsner and Zickfeld~\cite[Theorem
9]{Felsner08}.  (Note that their theorem is about coplanar orthogonal
surfaces; however, their proof considers the exact same set of
equations and their Claims 1 and 2 give the needed result.)

\longVerNote{We conclude this section by providing an example of a planar triangulation and a
  drawing of it such that there is no weight distribution (having only
  positive weights) that
  realizes it. We claim that the drawing shown in
  Figure~\ref{fig:eg_neg_weights} is such an example. Note that
  $R_{1}(x)\subseteq R_{1}(y)$ in any Schnyder wood $S$, independent
  of whether $(y,x)$ or $(y,z)$ belongs to $T_{3}$ in $S$. If weights
  were non-negative then we would have $x_{1}\leq y_{1}$, which is
  clearly not the case here.} 

\begin{figure}[h]
  \centering
  \usetikzlibrary{shadows,decorations}
\usetikzlibrary{decorations.markings,calc}
\begin{tikzpicture} [
  _vertex/.style ={circle,draw=black, fill=black,inner sep=1pt},
  r_vertex/.style={circle,draw=red,  fill=red,inner sep=1pt},
  g_vertex/.style={circle,draw=green,fill=green,inner sep=1pt},
  b_vertex/.style={circle,draw=blue, fill=blue,inner sep=1pt},
  _edge/.style={black,line width=1pt},
  r_edge/.style={red,line width=0.3pt},
  g_edge/.style={green,line width=0.3pt},
  b_edge/.style={blue,line width=0.3pt},
 every edge/.style={draw=black,line width=0.3pt},scale=0.25]

\begin{scope}[scale=1]
\coordinate (v1) at (-1.22cm, 12.70cm);
\coordinate (v2) at (9.15cm, -4.05cm);
\coordinate (v3) at (-11.30cm, -4.37cm);
\coordinate (v4) at (-0.61cm, 0.71cm);
\coordinate (v5) at (0.50cm, 2.86cm);
\coordinate (v6) at (-0.66cm, 2.28cm);
\coordinate (v7) at (6.51cm, -2.38cm);
\end{scope}
\node[_vertex,label=above:$a_1$](V1) at (v1) {};
\node[_vertex,label=below right:$a_2$](V2) at (v2) {};
\node[_vertex,label=below left:$a_3$](V3) at (v3) {};
\node[_vertex,label=below:$x$](V4) at (v4) {};
\node[_vertex](V5) at (v5) {};
\node[_vertex,label=above left:$z$](V6) at (v6) {};
\node[_vertex,label=above:$y$](V7) at (v7) {};

\begin{scope}[xshift=3.5cm,decoration={ markings, mark=at position 0.5 with {\arrow{>}}}]

\draw[thick] (V6) edge (V4);
\draw[thick] (V5) edge (V6);
\draw[thick] (V4) edge (V7);
\draw[r_edge,postaction={decorate}] (V5) -- (V1);
\draw[r_edge,postaction={decorate}] (V6) -- (V1);
\draw[b_edge,postaction={decorate}] (V4) -- (V3);
\draw[b_edge,postaction={decorate}] (V6) -- (V3);
\draw[g_edge,postaction={decorate}] (V4) -- (V2);
\draw[g_edge,postaction={decorate}] (V7) -- (V2);
\draw[g_edge,postaction={decorate}] (V5) -- (V2);
\draw (V1) edge (V2);
\draw (V1) edge (V3);
\draw (V2) edge (V3);
\draw[thick] (V6) edge (V7);
\draw[thick] (V7) edge (V5);

\end{scope}



\end{tikzpicture}
  \caption{\wsnote{A drawing that cannot be realized with positive weights.}}
  \label{fig:eg_neg_weights}
\end{figure}
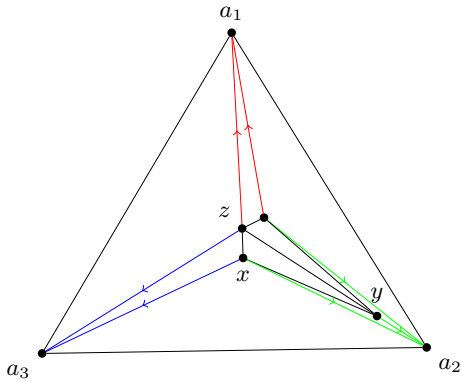



 


\section{\pnote{Conclusions and open problems}}
\label{sec:conclusions}

We have made a first step towards morphing straight-line planar graph
drawings with a polynomial number of linear morphs and on a
well-behaved grid.  Our method applies to weighted Schnyder drawings.
\Achange{ There is hope of extending to all straight-line planar
  triangulations.  The first author's thesis~\cite{BarreraCruz14}
  gives partial progress: an algorithm to morph from any straight-line
  planar triangulation to a weighted Schnyder drawing in
  \lFBnote{$4(n-4)$} steps---but not, unfortunately, on a nice grid.}
\lFBnote{This method is simpler than that of Angelini et
  al.~\cite{Angelini14} since the idea is to simply contract vertices
  of degree at most 5 and then uncontract them in reverse order while
  maintaining a Schnyder drawing. No convexifying routines are needed
  since there is no target drawing.}
  
\lAnote{
 It is an open question to extend our result to general
  (non-triangulated) planar graphs.   
  This might be possible using the extension of Schnyder's results to 3-connected planar graphs by
 Felsner~\cite{Felsner01,Felsner03}.
 }

  The problem of efficiently morphing planar graph drawings to
  preserve convexity of faces is wide open---nothing is known besides
  Thomassen's existence result~\cite{Thomassen}.

\bigskip\noindent {\bf Acknowledgements.}  
We thank Stefan Felsner for discussions, David Eppstein for suggestions, and an 
anonymous referee for pointing us to the work of
\Achange{
Bonichon et al.~\cite{Bonichon10}.
F.~Barrera-Cruz partially supported by Conacyt. 
P.~Haxell and A.~Lubiw
partially supported by NSERC.
}

\bibliographystyle{abbrv}
\bibliography{refs}

\newpage

\end{document}